\documentclass[journal]{IEEEtran}

%
\ifCLASSINFOpdf
\else
\fi

\pdfoutput=1
\usepackage[cmex10]{amsmath}

\usepackage{enumitem}
\usepackage{graphicx}
\usepackage{color}
\usepackage{amsmath}
\usepackage{mathtools}
\usepackage{multicol}
\usepackage{multirow}
\usepackage[english]{babel}
\usepackage{blindtext}
\usepackage{algorithm}
\usepackage{algorithmic}
\usepackage{balance}
\usepackage{amsfonts}
\usepackage{bm}
\usepackage{stfloats}
\usepackage{subfig}
\usepackage{amsthm}
\usepackage{amssymb}
\usepackage{setspace}
\usepackage[nosort]{cite}
\usepackage{CJK}
\usepackage{cite}
\usepackage{caption}
\usepackage{comment}
\usepackage{array,multirow}
\usepackage{graphicx}
\newtheorem{lemma}{Lemma}
\newtheorem{theorem}{Theorem}

\newtheorem{proposition}{Proposition}
\theoremstyle{plain}

\usepackage[table]{xcolor}
\usepackage{amsmath}
\usepackage{graphicx}
\usepackage{float}
\usepackage{cleveref}
\usepackage{soul} 
\usepackage{color, xcolor} 
\allowdisplaybreaks[4]

\begin{document}

\title{Enriched K-Tier Heterogeneous Satellite Networks Model with User Association Policies}

\author{Zhuhang Li, \IEEEauthorblockN{Bodong Shang,~\IEEEmembership{Member,~IEEE}}
\thanks{\textit{(Corresponding author: Bodong Shang)}}
\thanks{Z. Li and B. Shang are with Zhejiang Key Laboratory of Industrial Intelligence and Digital Twin, Eastern Institute of Technology, Ningbo, China, and also with State Key Laboratory of Integrated Services Networks, Xidian University, Xi’an, China (E-mails: jdhdyxj@163.com, bdshang@eitech.edu.cn).}}




\maketitle

\begin{abstract}
In the rapid evolution of the non-terrestrial networks (NTNs), satellite communication has emerged as a focal area of research 
due to its critical role in enabling seamless global connectivity. In this paper, we investigate two representative user association policies (UAPs) for multi-tier heterogeneous satellite networks (HetSatNets), namely the nearest satellite UAP and the maximum signal-to-interference-plus-noise-ratio (max-SINR) satellite UAP, where each tier is characterized by a distinct constellation configuration and transmission pattern. Employing stochastic geometric, we analyze various intermediate system aspects, including the probability of a typical user accessing each satellite tier, the aggregated interference power, and their corresponding Laplace transforms (LTs) under both UAPs. Subsequently, we derive explicit expressions for coverage probability (CP), non-handover probability (NHP), and time delay outage probability (DOP) of the typical user. Furthermore, we propose a novel weighted metric (WM) that integrates CP, NHP, and DOP to explore their trade-offs in the system design. The robustness of the theoretical framework is verified is verified through Monte Carlo simulations 
calibrated with the actual Starlink constellation, affirming the precision of our analytical approach. The empirical findings underscore an optimal UAP in various HetSatNet scenarios regarding CP, NHP, and DOP.. 
\end{abstract}

\begin{IEEEkeywords}
LEO satellite, satellite communications, multi-tier heterogeneous satellite networks, user association, stochastic geometry, coverage probability, non-handover probability, time delay outage probability.
\end{IEEEkeywords}

%
\IEEEpeerreviewmaketitle

\section{Introduction}

\IEEEPARstart{I}{n} recent years, there has been widespread recognition of the rapid growth in traditional terrestrial wireless communications, marked by an increase in both user numbers and the diversity of supported services  \cite{ohlen2016data,nguyen20216g}. However, limitations in network capacity and coverage hinder terrestrial systems from providing universally reliable, high-data-rate wireless services, particularly in challenging environments like oceans and mountains \cite{kawamoto2013traffic,9520380}. Non-terrestrial networks (NTN), especially Low Earth Orbit (LEO) satellites, have emerged as a promising solution for achieving global seamless connectivity \cite{wei2021hybrid,li2020maritime,xu2023space,10685064}, offering access to remote, rural, oceanic, and mountainous areas. 
With the increasing exploration of deep space and the development of extensive satellite constellations, deploying satellites at various orbital altitudes has become a growing trend. 

To address these connectivity challenges, the deployment of satellite networks capable of complementing terrestrial infrastructure has become essential. In response, several major global initiatives have been launched with the aim of bridging digital divides and enabling worldwide broadband access through advanced satellite constellations.
Various initiatives, notably Kuiper, LeoSat, OneWeb, Starlink, and Telesat, are deploying large-scale LEO satellite networks with multiple satellite altitudes, aiming to establish a comprehensive global communications network \cite{del2019technical}. Particularly, the Starlink project has illustrated the evolving trend towards multi-tier heterogeneous satellite networks (HetSatNet), which are poised to become a cornerstone in the future development of satellite communications \cite{mcdowell2020low}.

Prior research has offered valuable insights into satellite communication networks, particularly using stochastic geometry \cite{10689625, okati2020downlink,al2021analytic,10387244,10430115}. However, most studies focus primarily on single-tier satellite systems with user association based on proximity to the nearest satellite, leaving alternative association policies and multi-tier network analysis not fully elucidated.
For instance, \cite{guo2024user} investigates Distance-based Association (DbA) and Power-based Association (PbA) schemes. However, it overlooks the influence of practical satellite antenna patterns on mutual interference and overall system performance. Furthermore, it fails to consider instantaneous received signal-to-interference-plus-noise ratio (SINR) as a potential association metric, which limits its applicability in real-world scenarios. The study also lacks a thorough comparison of DbA and PbA performance across different satellite densities, altitudes, and transmission powers, leaving the question of optimal association scheme selection in multi-tier HetSatNet with different scenarios unresolved. Furthermore, the issue of increased interference power caused by the increased number of interference satellites remains unaddressed.

In this paper, we model and analyze a multi-tier HetSatNet from a system-level perspective by considering different user association policies (UAPs).
Specifically, the first UAP is called the nearest satellite UAP, where users connect to the closest satellite in HetSatNet with a two-sector antenna beam gain.
The second UAP is called the maximum signal-to-interference-plus-noise-ratio (max-SINR) UAP, where users connect to the satellite offering maximum instantaneous SINR. 
Additionally, we derive interference characteristics within the multi-tier HetSatNet under both UAPs.
Our study examines the impact of varying satellite densities, orbital altitudes, and transmission powers across different tiers under each UAP. 
Given the high relative velocity of satellites, frequent handovers between satellites and the ground user are inevitable. Hence, we also investigate the non-handover probability (NHP) under both UAPs within the multi-tier HetSatNet framework. 
Moreover, recognizing the significance of propagation delay in satellite-to-ground communications, we note that although the max-SINR UAP can provide a higher coverage probability (CP), it concurrently results in increased propagation delay. We propose time delay outage probability (DOP), highlighting the inherent balance between CP and DOP. This trade-off suggests that different UAPs should be employed to optimize performance in accordance with various HetSatNet deployments and the specific requirements of diverse communication scenarios. In addition to the traditional performance metrics such as CP, NHP, and DOP, a Weighted Metric (WM) framework is proposed which integrates these three metrics through adjustable weights. This design enables the network to adaptively balance coverage, stability, and latency according to different application requirements, providing a comprehensive performance evaluation criterion for multi-tier satellite networks.

\subsection{Related Work}
Within terrestrial cellular networks, deploying diverse infrastructure elements, such as micro, pico, and femtocells, along with distributed antennas, has led to the rise of $K$-tier Heterogeneous Cellular Networks (HCNs).
\cite{andrews2016primer, dhillon2012modeling,dhillon2011coverage,madhusudhanan2016analysis, 10530195} have proposed a model elucidating the characteristics of HCNs, wherein $K$ tiers of base stations (BSs) were situated in terrestrial networks, introducing disparities in average transmit power, data rate, and BS density across tiers. Authors in \cite{andrews2016primer} have examined average power-based cell association and instantaneous power-based cell selection for $K$-tier HCNs. \cite{dhillon2012modeling} has formulated expressions delineating CP and average rate across the entire network, under the assumption that a mobile user connects to the strongest candidate BS. Although the above works have provided average power-based association and instantaneous power-based selection policies for $K$-tier networks, they focused on terrestrial networks, lacking the fundamentals of spherical geometry and interference from space. Modeling and analyzing multi-tier HetSatNet with various UAPs is crucial for developing NTN.

Stochastic geometry serves as a foundational mathematical framework for characterizing the spatial distribution of wireless network nodes \cite{9777886,9178984}. \cite{park2022tractable, okati2022nonhomogeneous,jung2022performance,wang2021multi} have employed stochastic geometric methodologies to address the intricacies of satellite-to-ground communication. In \cite{park2022tractable}, the spatial arrangements of satellites and users are modeled through the Poisson point processes (PPPs). The study has provided analytical formulations to assess the CP of satellites, establishing a robust lower bound, and has determined the optimal number of satellites for achieving optimal coverage. \cite{okati2022nonhomogeneous} has modeled the LEO network as a nonhomogeneous PPP, with the intensity being a function of satellite distribution in terms of size, altitude, and orbital plane inclination. However, these studies focus on single-tier satellite networks. 

For multi-tier satellite networks, \cite{choi2024cox} has modeled satellites situated on orbits varied in altitude as linear PPPs, forming a Cox point process. \cite{ talgat2020stochastic } derived the CP for the described setup, assuming that LEO satellites are placed at different altitudes and the locations of the LEO satellites are modeled as a binomial point process (BPP).
\cite{ qiu2023interference } used stochastic geometry to simulate the interference of multi-tier NGSO networks by modeling the locations of satellites as a randomly distributed points process in a 3-D space.
\cite{wang2021multi} has considered a 2-tier satellite-terrestrial communication system, and its predominant emphasis remains on analyzing backhaul capacity. \cite{yim2024modeling} analyzed the downlink coverage performance of multi-tier integrated satellite-terrestrial networks. Although the above works explored multi-tier satellite networks, the user-to-satellite association strategies, the antenna gains, and the variation of powers of satellites over different tiers are not considered.
\cite{guo2024user} proposed three association schemes for the multi-tier LEO satellite-based networks, i.e., the DbA scheme, the PbA scheme, and a random selection. The authors mentioned satellites in the same tier have identical transmit power and antenna array gain, and thus, the nearest one to the UE provides the strongest average received signal. Hence, the set of candidate LEO satellites is identical to that formulated for the DbA scheme. Therefore, the difference between DbA and PbA is not apparent, and the antenna gain variation and max-SINR UAP are not considered and compared.

In contrast to previous work, we present a multi-tier HetSatNet architecture utilizing the nearest satellite UAP and a max-SINR satellite UAP. Additionally, we incorporate a two-sector antenna beam pattern, which is more realistic in real-world satellite communication systems. We further investigate the CP and NHP of a typical user under the above two UAPs within the multi-tier satellite system. Additionally, we introduce a novel weighted metric comprising CP, NHP, and DOP in a balanced manner, thereby offering a comprehensive system performance assessment.

\subsection{Contributions}
In this paper, we analyze a multi-tier downlink satellite network with two UAPs, i.e., 
the nearest satellite UAP and the max-SINR satellite UAP. Considering the high mobility of satellites, which results in frequent handovers and increased propagation delays in satellite-ground communication, we introduce two key performance metrics: the NHP and the DOP. These metrics facilitate the development of a novel weighted metric that integrates CP, NHP, and DOP, offering valuable design insights for diverse communication scenarios. Moreover, a two-sector antenna gain model is employed to characterize both the desired signal and interference in 
downlink communications, yielding a more realistic representation of the system performance.
\begin{itemize}
    \item We use stochastic geometry to model both the nearest and max-SINR satellite UAPs, which 
    provides a rigorous theoretical foundation for understanding the CP under the max-SINR UAP and offers new insights into the mathematical relationship between satellite tiers, interference, and system performance, 
underscoring the uniqueness of the max-SINR UAP.
\end{itemize}
\begin{itemize}
    \item We introduce two other key metrics, i.e., NHP and DOP. The NHP quantifies the likelihood that the handover time of the serving satellite remains below a predefined threshold, thereby reflecting the stability of satellite-user connections.
The DOP measures the probability that the satellite-user propagation delay falls within an acceptable limit, 
highlighting latency-sensitive performance.
These metrics are critical in understanding the impact of satellite mobility and propagation delays on real-world network performance. 
By adjusting the weights of CP, NHP, and DOP, we observe that the max-SINR UAP performs better when the CP and NHP are prioritized, 
whereas the nearest satellite UAP tends to outperform the max-SINR UAP as the weight of DOP increases. This trade-off is also dependent on satellite deployment density, 
offering valuable insights for policy selection based on application-specific requirements.
\end{itemize}
\begin{itemize}
    \item Through Monte Carlo simulations based on realistic Starlink constellations and theoretical analysis, we investigate the impact of the number of satellite tiers, tier-specific transmit powers, and satellite densities on CP in a multi-tier HetSatNet with two UAPs. Results indicate that the nearest satellite UAP generally performs better in single-tier configurations, while in multi-tier HetSatNets, the max-SINR UAP yields higher CP when higher-tier satellites have stronger transmit power. Additionally, for systems with the same number of tiers, the max-SINR UAP generally outperforms the nearest satellite UAP in terms of CP. The NHP under the max-SINR UAP shows slower degradation over time, suggesting improved connection stability and communication reliability. 
\end{itemize}
\begin{itemize}
    \item We further discuss the practical implications of these findings for real-world satellite network design. 
Our analysis suggests that in configurations involving higher-tier satellites, 
the max-SINR UAP can significantly enhance overall system performance and reliability, 
making it a compelling choice for future satellite constellations.
\end{itemize}

The paper is organized as follows: Section II discusses the network models, channel models, and performance metrics for multi-tier satellite communication systems with two UAPs. Section III and Section IV analyze association probabilities, CPs, and NHPs for systems with two UAPs, respectively. Section V presents simulation results, and Section VI elucidates the research outcomes in the concluding analysis.

\section{System Model}
This section presents the proposed network model, UAPs, satellite-terrestrial channel models, and performance metrics for analyzing multi-tier satellite communication networks.
The main variables utilized in this paper are delineated in Table I. 
\subsection{Network Model}
\captionsetup{font={scriptsize}}
\begin{figure}
\begin{center}
\setlength{\abovecaptionskip}{+0.2cm}
\setlength{\belowcaptionskip}{-0.5cm}
\centering
  \includegraphics[width=3.4in, height=1.8in]{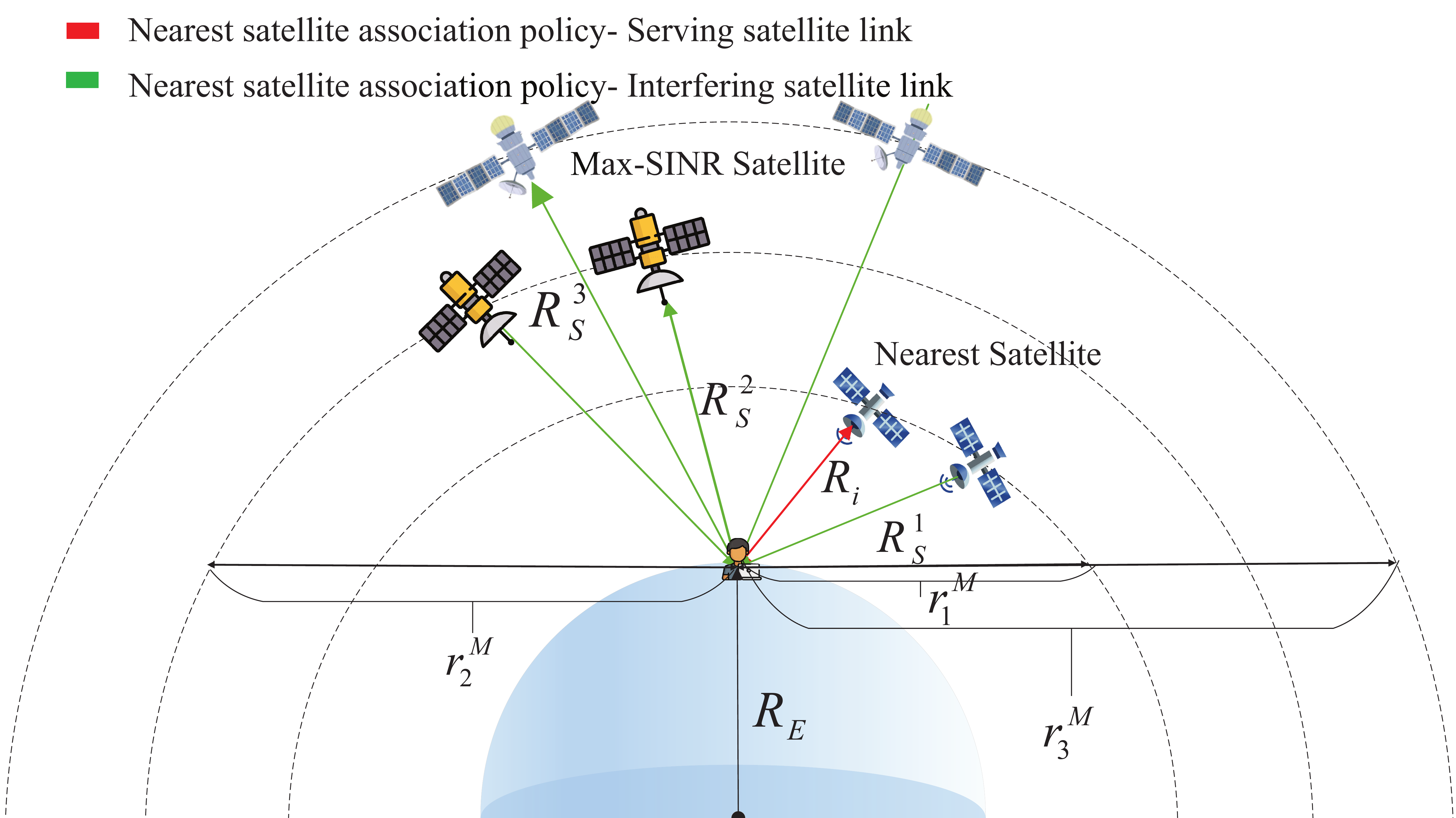}
\renewcommand\figurename{Fig.}
\caption{\scriptsize Multi-tier satellite-terrestrial communication system with nearest satellite UAP.}
\label{fig: Fig1}
\end{center}
\end{figure}

\captionsetup{font={scriptsize}}
\begin{figure}
\begin{center}
\setlength{\abovecaptionskip}{+0.2cm}
\setlength{\belowcaptionskip}{-0.5cm}
\centering
  \includegraphics[width=3.4in, height=1.8in]{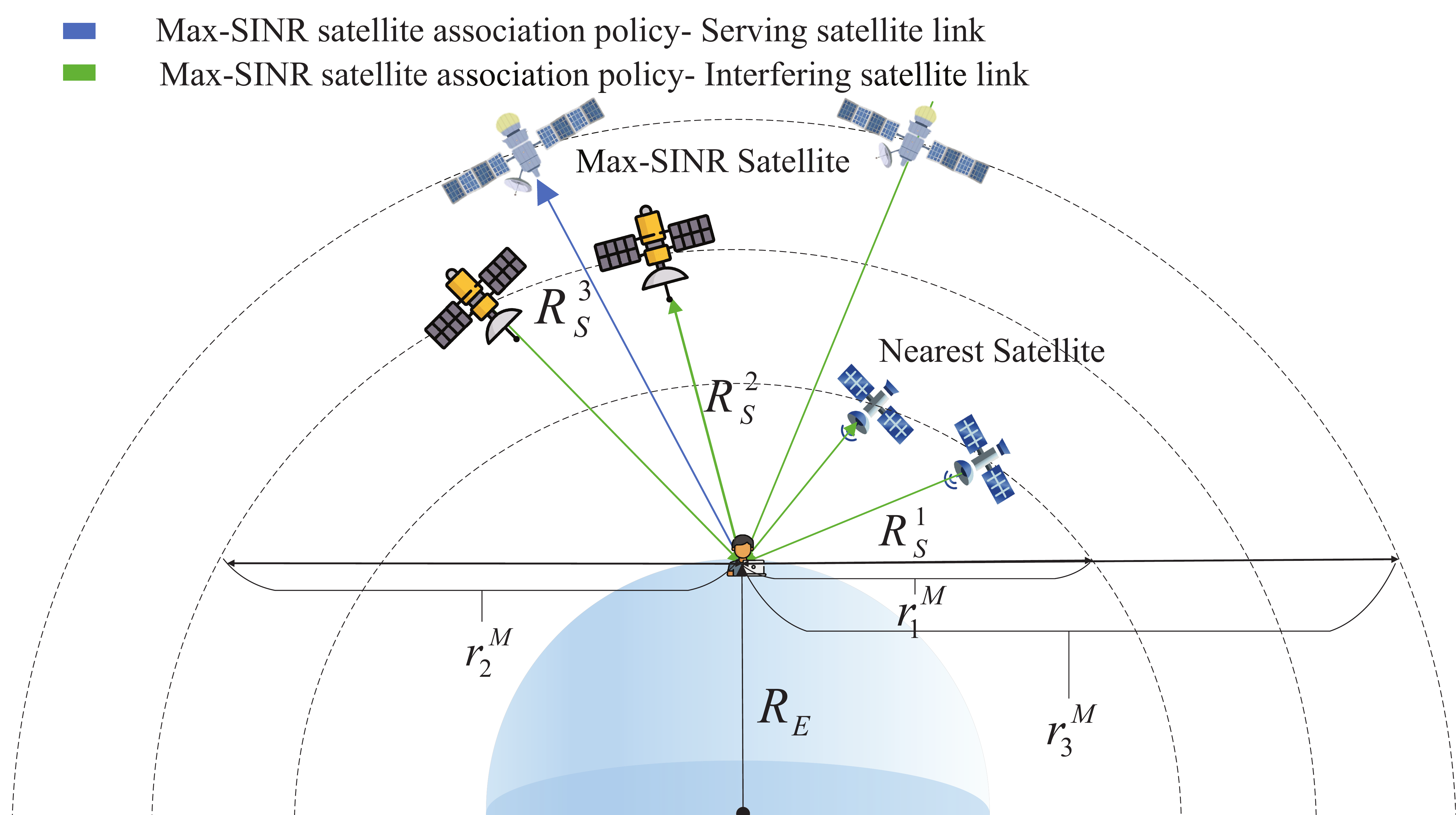
  }
\renewcommand\figurename{Fig.}
\caption{\scriptsize Multi-tier satellite-terrestrial communication system with max-SINR satellite UAP.}
\label{fig: Fig2}
\end{center}
\end{figure}
This study explores the performance of a multi-tier communication network that connects satellite and terrestrial users, as illustrated in Fig. \ref{fig: Fig1} and Fig. \ref{fig: Fig2}. The evaluation is performed using the nearest satellite and max-SINR satellite UAPs, respectively. We consider a multi-tier HetSatNet, with satellites distributed on multiple spherical surfaces of radius $R_S^i = R_E + h_i$, where $h_i$ is the orbital height of tier $i$ satellites. The satellite placement in each tier follows a Homogeneous Spatial Poisson Point Process (HSPPP) with a density of ${\lambda _{S}^{i}}$, represented as ${\Phi_{S}^{i}} = \left\{ {{x_{1}^{i}},{x_{2}^{i}},...{x_{N}^{i}}} \right\}$, where $N$ in tier $i$ is a Poisson-distributed variable with a mean of ${\lambda _{S}^{i}}4\pi \left({R_{S}^{i}}\right)^2$. Similarly, users are uniformly distributed over the Earth's surface with a radius of ${R_E}$, modeled by an HSPPP with density ${\lambda _U}$, denoted as ${\Phi _U} = \left\{ {{u_1},{u_2},...{u_F}} \right\}$. The user number $F$ follows a Poisson distribution with its mean calculated as ${\lambda _U}4\pi R_E^2$.

For downlink communications between satellites and terrestrial users, and for the sake of generality, we analyze the downlink performance as experienced by a typical user located at $\left( {0,0,{R_E}} \right)$.
The satellite's visible region in the $i$-th tier, ${{\cal A}_{vis}^{i}}$, is geometrically a dome-shaped area within the user's visual field, crucial for signal reception from overhead satellites and a prime zone for signal interference. The satellites reside in ${{\cal A}_{vis}^{i}}$ from set $\Phi_{vis,i}^S$.
The maximum distance at which a satellite in tier $i$ can influence a typical user is defined as ${r_{max}^{i}} = \sqrt {\left({R_{S}^{i}}\right)^2 - R_E^2}$, as depicted in Fig. \ref{fig: Fig1} and Fig. \ref{fig: Fig2}.

Directional beamforming with fixed-beam antennas is employed on the satellites, ensuring that the beams remain oriented towards the subsatellite point. However, for mathematical tractability, several research studies \cite{9678973, alkhateeb2017initial, zhu2018secrecy, dabiri20203d,chen2023coverage} have assumed that satellites use sectorized beam patterns, where the antenna gains of the main and side lobes are denoted as $G_{ml}$ and $G_{sl}$, respectively. Thus, the transmit antenna gain of the satellite is expressed as
\begin{equation}
\label{1}
{G_i} = \left\{ \begin{array}{l}
{G_{ml}^{i}},{\rm{ }}h_i \le {R_{i}} \le {r_m^i},\\
{G_{sl}^{i}}, \enspace {\rm{  otherwise}},
\end{array} \right.
\end{equation}
where $i$ represents $i$-th tier, the distance ${r_m^i}$ denotes the maximum distance at which a satellite in tier $i$ can reach the typical user via its main lobe, which is given in (\ref{3}). $R_i$ represents the distance between the typical user and the serving satellite. We assume the typical user is equipped with an omnidirectional antenna, of which receiving antenna gain is normalized to one. Due to the two-sector antenna beam pattern, the set $\Phi_{vis,i}^S$ is divided into two distinct subsets, i.e., $\Phi _{m}^{i}$, comprising satellites in tier $i$ whose main lobes are directed toward the typical user, and $\Phi _{s}^{i}$, encompassing satellites in tier $i$ whose side lobes are oriented toward the typical user. Spatially, satellites in $\Phi _{m}^{i}$ and $\Phi _{s}^{i}$ are respectively situated on a spherical cap ${\cal A}_{m}^{i}$, and a spherical ring ${\cal A}_{s}^{i}$, designated as the main-lobe reception area and side-lobe reception area, respectively. Defined $\mathcal{A}_r^i$ as the spherical surface that encompasses all tier-$i$ satellites located within a distance $r$ from the typical user. ${\varphi _m^i}$ is the threshold serving angle between the main and side lobes of the beam pattern. 
The contact angle corresponding to half of the ${\varphi _m^i}$ is referred to as the main lobe contact angle and is determined by 
\begin{equation}
\label{2}
\begin{aligned}
{\phi _m^i} = \arcsin \left( {\frac{{{R_S^i}}}{{{R_E}}}\sin \left( \frac{{{\varphi _m^i}} }{2} \right)} \right) - \frac{{\varphi _m^i}}{2}. 
\end{aligned}
\end{equation}
The distance ${r_m^i}$ is the maximum distance that the satellite in tier $i$ towards the typical user with its main lobe, which is calculated by 
\begin{equation}
\label{3}
\begin{aligned}
r_{m}^{i} = \sqrt{\left ( R_S^i \right )  ^2 + R_E^2 - 2R_S^iR_E  cos\left ( \phi _m^i \right )}. 
\end{aligned}
\end{equation}
\subsection{Association Policy}
In the proposed multi-tier HetSatNet communication system model delineated in this paper, we incorporate two distinct UAPs, i.e., the nearest satellite UAP and the max-SINR satellite UAP. 
\subsubsection{Nearest Satellite UAP} Users connect to their nearest satellite within the $K$-tier satellite communication system, as illustrated in Fig. \ref{fig: Fig1}.
\subsubsection{Max-SINR Satellite UAP} Each user connects to the satellite that offers the highest instantaneous SINR, which is shown in Fig. \ref{fig: Fig2}.

\subsection{Channel Model}
The Shadowed Rician (SR) fading model \cite{jung2022performance,8894851,9497773,1198102} is employed to articulate the characteristics of small-scale channel fading, introducing parameters such as the Nakagami fading coefficient $m$, the average power of the scattered component $b_{0}$, and the average power of the line-of-sight component $\Omega$. In alignment with \cite{1198102,9511625,9918046}, a gamma random variable is employed to approximate the PDF of the channel gain $H$ as
\begin{equation}
\label{4}
\begin{aligned}
{f_H}\left( h \right) \approx \frac{1}{{{\beta ^\chi }\Gamma \left( \chi  \right)}}{h^{\chi  - 1}}\exp \left( { - \frac{h}{\beta }} \right),
\end{aligned}
\end{equation} where $\Gamma \left( \chi  \right)$ is the gamma function, $\chi  = \frac{{m{{\left( {2{b_0} + \Omega } \right)}^2}}}{{4mb_0^2 + 4m{b_0}\Omega  + {\Omega ^2}}}$ and $\beta = \frac{{2{b_0} + \Omega }}{\chi }$ are for the shape and scale parameters, respectively.
\subsection{Performance Metrics}
As for the nearest satellite UAP network, the desired signal power is $S_{i}^{E} = {p_i}{G_i}{H_{X_{i}}}{R_{i}}^{ - \alpha }$, where ${p_i}$ and ${G_i}$ are transmit power and satellite antenna gain when accessing tier $i$, separately. ${H_{X_{i}}}$ and $R_{i}$ are the channel power gain and the distance between the serving satellite and the typical user when accessing tier $i$, respectively. The SINR of the typical user is 
\begin{equation}
\label{5}
{\rm{SINR}}_{i}^{E} = \frac{S_{i}^{E}}{{{I^{E}} + {{N^{E}}}}} 
= \frac{{{p_i}{G_i}{H_{X_{i}}}{R_{i}}^{ - \alpha }}}{{{I^{E}} + \sigma^2}},
\end{equation}
where ${I^{E}}$ is the aggregated interference signal power, with further analysis provided in Section III.B. The noise power is denoted as $N^{E}=\sigma^2$. 
The overall CP of the nearest satellite UAP system is given by
\begin{equation}
\label{6}
\mathbb{P}^{E}_{C} = \sum\limits_{i = 1}^K \mathbb{P} {\left\{ {{\rm{SINR}}_i^{E} \ge \gamma _{th}^i|S = i} \right\}\mathbb{P}_{ass,i}^{E}}, 
\end{equation}
where $S=i$ denotes the event of accessing the nearest satellite in tier $i$, $\mathbb{P}_{ass,i}^{E}$ signifies the probability of accessing tier $i$ with the nearest satellite UAP, $\gamma_{th}^{i}$ denotes the SINR threshold in tier $i$.
As for the max-SINR satellite UAP network, the CP of the system is given by
\begin{equation}
\label{7}
\mathbb{P}\left\{ {\mathop {\max }\limits_{i,i \in K} \mathop {\max }\limits_{X \in {\Phi _{vis,i}^S}} {\rm{SINR}}\left( X \right) > {\gamma _{th}^i}} \right\}.
\end{equation}

Inspired by \cite{al2021session}, we propose the NHP, defined as the simultaneous occurrence where the received SINR exceeds a threshold, while the time required for the nearest satellite traversing beyond the dome region surpasses a designated threshold time. Simultaneously, we introduce the concept of time delay outage, which is defined as the event in which the propagation delay between the typical user and the nearest satellite is less than a time delay threshold. We present the NHP and the DOP for both the nearest and max-SINR satellite UAP networks as $\mathbb{P}_{NH}^{E}$, $\mathbb{P}_{NH}^{M}$, $\mathbb{P}_{DO}^{E}$, and $\mathbb{P}_{DO}^{M}$, separately. A detailed
analysis is provided in Section III.C.
Additionally, we introduce a novel weighted metric (WM) composed of the CP, the NHP, and the DOP. The WM for the nearest satellite and max-SINR satellite UAPs can be expressed separately as
\begin{equation}
\label{8}
{{WM}^{E}} = {w_1^N} \mathbb{P}_C^{E} + {w_2^N} \mathbb{P} _{NH}^{E} + {w_3^N} \mathbb{P} _{DO}^{E},
\end{equation}
and
\begin{equation}
\label{9}
{{WM}^{M}} = {w_1^M} \mathbb{P}_C^{M} + {w_2^M} \mathbb{P} _{NH}^{M}+ {w_3^M} \mathbb{P} _{DO}^{M},
\end{equation}
where ${w_1^N}$ and ${w_1^M}$ represent the weighted parameters for CPs under the two UAPs, ${w_2^N}$ and ${w_2^M}$ represent the weighted parameters for NHPs under the same policies, and ${w_3^N}$ and ${w_3^M}$ represent the weighted parameters for DOPs. 

The proposed WM framework is not only an analytical tool but also serves as a practical decision-making mechanism for network design. The introduction of weighting factors $\omega_1$, $\omega_2$, and $\omega_3$ explicitly captures the trade-offs between CP, NHP, and DOP. This enables adaptive adjustment of the UAP strategy in response to diverse operational scenarios. For instance, CP-oriented IoT backhaul applications may prioritize $\omega_1$, while DOP-sensitive scenarios such as telesurgery may assign higher weight to $\omega_3$. The WM-based evaluation thus supports application-specific strategy selection beyond single-metric optimization.
\begin{table*}[t]
\setlength{\abovecaptionskip}{0cm} 
\setlength{\belowcaptionskip}{-0.12cm}
\captionsetup{font={normalsize}}
\caption{Main variables used in this paper}
\centering
\small
\begin{tabular}{l p{7cm} l p{6cm} l p{7cm} l p{7cm}}
\hline
\hline
\textbf{Notation} & \textbf{Description} & \textbf{Notation} & \textbf{Description}  \\
\hline
\hline
${\Phi _S^i}$, ${\Phi _U}$ & Set of satellites in tier $i$; Set of users & $R_E$ & Radius of the Earth (km) \\

${\lambda _{S}^{i}}$, ${\lambda _{U}}$ & Densities of satellites in tier $i$; densities of users & $\alpha$ & Path-loss exponent  \\

$h_i$, $R_S^i$ & Height and radius of satellites in tier $i$ (km) & $ \sigma^2$  & Noise power for the two UAPs \\

$G_{ml}^i, G_{sl}^i$ & Antenna gains of satellites in tier $i$ (dBi) & $\gamma_{th}^{i}$ & SINR threshold of tier $i$ \\

$\varphi_D^i$ & Earth-based contact angle of dome region in tier $i$ (rad) & 
$t_{th}^{i}$ & Threshold time for satellite switching (s) \\

$\chi$ & Nakagami shape parameter&
$c_1$,$c_2$,$c_3$ & Satellite’s
velocities in each tier (m/s) \\

$\mathbb{P}_{NP}^E$, $\mathbb{P}_{NP}^M$ & NHP under the nearest and max-SINR satellite UAPs & $\mathbb{P}_C^E$, $\mathbb{P}_C^M$ & CPs under the nearest and max-SINR satellite UAPs \\

$R_i$ & Distance between the serving satellite in tier $i$ and the typical user (km) & 
$\mathbb{P}_{ass,i}^E$, $\mathbb{P}_{ass,i}^M$ & Probability of accessing
tier $i$ with the nearest satellite and the max-SINR satellite UAP \\

$r_{max}^i$ & Maximum distance between satellite in tier $i$ and the typical user (km) & $I^E$ & Aggregated interference signal power under the nearest satellite UAP\\

$\theta$ & Longitude angle between satellite departure and initial direction (rad) & 
$\omega_1^N$,$\omega_1^M$,$\omega_2^N$,$\omega_2^M$ &  Weighted parameters for CPs and NHPs under both UAPs \\
\hline
\hline
\end{tabular}
\end{table*}
\section{Nearest Satellite Association Policy in Multi-Tier Satellite Networks}
This section is devoted to a comprehensive performance analysis of a multi-tier satellite communication network, specifically focusing on the nearest satellite UAP. Our analytical framework focuses on four essential components, i.e., a rigorous evaluation of association probability, a distance distribution between the serving satellite and the typical user, an examination of the aggregated interference power LT, and a thorough exploration of CP, NHP, and DOP.
These critical elements are integrated within the scope of this section.
\subsection{Association Probability and Distance Distribution}

Within the framework of the nearest satellite UAP, the probability of accessing tier $i$ is based on the condition that the distance between the typical user and the nearest satellite in tier $i$ is less than the corresponding distances from the typical user to the nearest satellites in all other tiers. Based on this framework, we systematically compute the probability linked to the typical user accessing the nearest satellite in tier $i$. 

Simultaneously, we analyze two distinct scenarios and calculate their conditional distance distributions.
\textit{Scenario One: }the $i$-th serving satellite's coordinates fall within ${{\cal A}_{m}^{i}}$.
\textit{Scenario Two: }the $i$-th serving satellite is located in ${{\cal A}_{s}^{i}}$.
Under the conditions of these two scenarios, we derive the corresponding conditional PDF of distances between the typical user and the serving satellite in tier $i$.

\begin{proposition}
The probability of the typical user accessing the nearest satellite in tier $i$ is given by 
\begin{equation}
\label{10}
\begin{aligned}
 &\mathbb{P}_{ass,i}^{E}=\mathbb{P}^{E}\left\{{S = i} \right\} \\
 &=2{\lambda _S^i}\pi \frac{{{R_{S}^{i}} }}{{{R_E}}}\int_{{h_i}}^{r_{max}^i} {r{e^{ - \sum\limits_{k = 1}^K {{\bf{1}}\left( {r - {h_k} \ge 0} \right) {\lambda _S^k}\pi \frac{{{R_{S}^{k}}}}{{{R_E}}}\left( {{r^2} - h_k^2} \right)} }}dr}, 
\end{aligned}
\end{equation}
where $\mathbf{1} \left ( \cdot  \right ) $ is indicator function.
\begin{proof}
    Please refer to Appendix A.
\end{proof}
\end{proposition}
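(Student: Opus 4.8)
The plan is to express the association event $\{S=i\}$ in terms of $K$ independent ``nearest-satellite distance'' random variables and then to evaluate the resulting integral via the void probability of a Poisson process restricted to a spherical cap. For each tier $k$, let $R_k$ be the distance from the typical user at $(0,0,R_E)$ to its closest tier-$k$ satellite. Under the nearest satellite UAP the user is served by tier $i$ exactly on the event $\{R_i < R_k \text{ for all } k\neq i\}$, ties having probability zero, so, using that the point processes $\Phi_S^1,\dots,\Phi_S^K$ are independent,
\[
\mathbb{P}^{E}\{S=i\}=\int_{h_i}^{r_{max}^i} f_{R_i}(r)\,\prod_{k\neq i}\mathbb{P}\{R_k>r\}\,dr ,
\]
where the limits reflect that a tier-$i$ satellite lies at Euclidean distance at least $h_i$ and at most $r_{max}^i$ from the user.

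The first and central step is the spherical-geometry computation of $|\mathcal{A}_r^k|$, the area of the cap of the tier-$k$ sphere consisting of points within Euclidean distance $r$ of the user. Writing the law of cosines $r^2=(R_S^k)^2+R_E^2-2R_S^k R_E\cos\phi$ for the geocentric angle $\phi$, and using that a spherical cap of half-angle $\phi$ on a sphere of radius $R_S^k$ has area $2\pi (R_S^k)^2(1-\cos\phi)$, substitution yields $|\mathcal{A}_r^k|=\pi\frac{R_S^k}{R_E}(r^2-h_k^2)$ for $h_k\le r\le r_{max}^k$. Since $\Phi_S^k$ is an HSPPP of intensity $\lambda_S^k$, the number of tier-$k$ satellites in $\mathcal{A}_r^k$ is Poisson with mean $\lambda_S^k|\mathcal{A}_r^k|$, whence $\mathbb{P}\{R_k>r\}=\exp(-\lambda_S^k\pi\frac{R_S^k}{R_E}(r^2-h_k^2))$ for $r\ge h_k$ and $\mathbb{P}\{R_k>r\}=1$ for $r<h_k$; the two cases combine into $\exp(-\mathbf{1}(r-h_k\ge 0)\lambda_S^k\pi\frac{R_S^k}{R_E}(r^2-h_k^2))$. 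Differentiating the $k=i$ instance gives $f_{R_i}(r)=2\lambda_S^i\pi\frac{R_S^i}{R_E}\,r\,\exp(-\lambda_S^i\pi\frac{R_S^i}{R_E}(r^2-h_i^2))$. Substituting $f_{R_i}$ and the product over $k\neq i$ into the integral above, pulling out the constant $2\lambda_S^i\pi\frac{R_S^i}{R_E}$, and merging the tier-$i$ exponential with the product into a single sum over $k=1,\dots,K$ (the indicator for $k=i$ being identically $1$ on the range of integration) produces exactly (\ref{10}).

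The hard part is the geometric bookkeeping rather than any probabilistic subtlety: one must obtain the cap area as an explicit function of the Euclidean distance $r$ (not of $\phi$), and handle the range of validity with care, because tiers with $h_k<h_i$ have a smaller horizon distance $r_{max}^k<r_{max}^i$, so the expression $\pi\frac{R_S^k}{R_E}(r^2-h_k^2)$ appearing in (\ref{10}) should be understood as the mean number of tier-$k$ satellites that are simultaneously visible to and within distance $r$ of the typical user -- equal to this formula on $[h_k,r_{max}^k]$ and saturating beyond it. Once the cap area and this admissible range are settled, the remaining ingredients -- the void-probability form of $\mathbb{P}\{R_k>r\}$, independence across tiers, and the exponential algebra -- are routine.
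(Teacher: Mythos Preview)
Your proposal is correct and follows essentially the same approach as the paper's Appendix~A: condition on $R_i=r$, use independence across tiers to write $\mathbb{P}\{S=i\}=\int_{h_i}^{r_{max}^i}\prod_{k\neq i}\mathbb{P}\{R_k>r\}\,f_{R_i}(r)\,dr$, evaluate each factor via the void probability of the HSPPP on the spherical cap, and then merge the exponentials. In fact you are more explicit than the paper about the key spherical-geometry step (deriving $|\mathcal{A}_r^k|=\pi\frac{R_S^k}{R_E}(r^2-h_k^2)$ from the law of cosines) and about the saturation issue for tiers with $r_{max}^k<r$, which the paper's brief appendix absorbs into the indicator without comment.
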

\begin{proposition}
Conditioned on accessing tier $i$, the PDFs of the distance distribution between the typical user and the serving satellite when the serving satellite is in $ {{\cal A}_{m}^{i}}$ and $ {{\cal A}_{s}^{i}}$, respectively, are given by
\begin{equation}
\label{11}
\begin{aligned}
&{f_{{R_i}}^{E}}\left( {r| {\Phi \left( \mathcal{A} _{m}^{i} \right) > 0},S = i} \right) = \\
&\frac{{2{\lambda _S^i}\pi \frac{{R_S^i}}{{{R_E}}}r{e^{ - \sum\limits_{k = 1}^K {{\bf{1}}\left( {r \ge {h_k}} \right){\lambda _S^k}\pi \frac{{R_S^k}}{{{R_E}}}\left( {{r^2} - h_k^2} \right)} }}}}{ \mathbb{P}_{m,ass,i}^{E} },
\end{aligned}
\end{equation}
and
\begin{equation}
\label{12}
\begin{aligned}
&{f_{{R_i}}^{E}}\left( {r| {\Phi \left( \mathcal{A} _{m}^{i} \right) = 0},S = i} \right) = \\
&\frac{{2{\lambda _S^i}\pi \frac{{R_S^i}}{{{R_E}}}r{e^{ - \sum\limits_{k = 1}^K {{\bf{1}}\left( {r \ge {h_k}} \right){\lambda _S^k}\pi \frac{{R_S^k}}{{{R_E}}}\left( {{r^2} - h_k^2} \right)} }}}}{\mathbb{P}_{s,ass,i}^{E}},
\end{aligned}
\end{equation}
where $\mathbb{P}_{m,ass,i}^{E} = \mathbb{P}_{m,i}^{E} \mathbb{P}^{E}_{ass,i}$, $\mathbb{P}_{s,ass,i}^{E} = \mathbb{P}_{s,i}^{E} \mathbb{P}^{E}_{ass,i}$, $\mathbb{P}_{m,i}^{E}$ and $\mathbb{P}_{s,i}^{E}$ are the probabilities that the serving satellite resides in $\mathcal{A} _{m}^{i}$ and $\mathcal{A} _{s}^{i}$, respectively. $\mathbb{P}_{m,i}^{E} = \left( {1 - {e^{ - \lambda _S^i2\pi {{\left( {R_S^i} \right)}^2}\left( {1 - \cos \left( {{\phi _m^i}} \right)} \right)}}} \right)$, and $\mathbb{P}_{s,i}^{E} =  1-\mathbb{P}_{m,i}^{E}$, ${\mathbb{P}^{E}_{ass,i}}$ is given in (\ref{10}).
\end{proposition}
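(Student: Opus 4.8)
The plan is to derive the two conditional PDFs by conditioning the nearest-satellite distance distribution in tier $i$ on the two mutually exclusive events $\{\Phi(\mathcal{A}_m^i)>0\}$ (serving satellite in the main-lobe cap) and $\{\Phi(\mathcal{A}_m^i)=0\}$ (serving satellite in the side-lobe ring), \emph{jointly} with the event $S=i$ of accessing tier $i$. The key observation is that, because the satellite process in tier $i$ is an HSPPP mapped onto a spherical cap, the void-probability / nearest-neighbour argument already used in Appendix A gives the ``raw'' (sub-probability) density
\begin{equation}
\label{prop2plan1}
g_i(r) = 2\lambda_S^i\pi\frac{R_S^i}{R_E}\,r\,e^{-\sum_{k=1}^{K}\mathbf{1}(r\ge h_k)\lambda_S^k\pi\frac{R_S^k}{R_E}(r^2-h_k^2)},\qquad h_i\le r\le r_{max}^i,
\end{equation}
which is exactly the integrand in \eqref{10}; integrating it over $[h_i,r_{max}^i]$ returns $\mathbb{P}_{ass,i}^{E}$. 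Hence the strategy is simply to renormalize $g_i(r)$ by the probability of the conditioning event, i.e. to show that $f_{R_i}^{E}(r\mid\Phi(\mathcal{A}_m^i)>0,S=i)=g_i(r)\mathbf{1}(r\le r_m^i)/\mathbb{P}_{m,ass,i}^{E}$ and $f_{R_i}^{E}(r\mid\Phi(\mathcal{A}_m^i)=0,S=i)=g_i(r)\mathbf{1}(r> r_m^i)/\mathbb{P}_{s,ass,i}^{E}$, with the stated forms for $\mathbb{P}_{m,i}^{E}$ and $\mathbb{P}_{s,i}^{E}$.

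The steps, in order, are: (i) recall from Appendix A that the CCDF of the nearest-tier-$i$ distance conditioned on $S=i$ has derivative $g_i(r)$, so $g_i$ integrates to $\mathbb{P}_{ass,i}^{E}$ over the full range; (ii) note that the serving satellite lies in $\mathcal{A}_m^i$ iff its distance to the typical user is at most $r_m^i$ (equivalently, its Earth-contact angle is below $\phi_m^i$), by the geometry in \eqref{2}--\eqref{3}, so the event $\{\Phi(\mathcal{A}_m^i)>0, S=i\}$ coincides with $\{h_i\le R_i\le r_m^i\}$ intersected with $\{S=i\}$; (iii) compute $\mathbb{P}_{m,i}^{E}$ as the probability that at least one tier-$i$ satellite falls in the spherical cap $\mathcal{A}_m^i$ — by the void probability of the HSPPP this is $1-\exp(-\lambda_S^i\,|\mathcal{A}_m^i|)$ with cap area $|\mathcal{A}_m^i| = 2\pi(R_S^i)^2(1-\cos\phi_m^i)$, giving the displayed expression, and $\mathbb{P}_{s,i}^{E}=1-\mathbb{P}_{m,i}^{E}$; (iv) apply the definition of conditional density, $f_{R_i}^{E}(r\mid A, S=i) = \frac{d}{dr}\mathbb{P}(R_i\le r, A\mid S=i)/\mathbb{P}(A\mid S=i)$ — but since the numerator on both the event $A=\{\Phi(\mathcal{A}_m^i)>0\}$ and its complement restricted to the admissible $r$-range is just $g_i(r)$, and the normalizer is $\mathbb{P}_{m,ass,i}^{E}=\mathbb{P}_{m,i}^{E}\mathbb{P}_{ass,i}^{E}$ (resp. $\mathbb{P}_{s,ass,i}^{E}$), dividing yields \eqref{11} and \eqref{12}. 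Finally, one checks consistency: integrating \eqref{11} over $[h_i,r_m^i]$ and \eqref{12} over $[r_m^i,r_{max}^i]$ and recombining recovers \eqref{10}, and the two conditional PDFs mix with weights $\mathbb{P}_{m,i}^{E},\mathbb{P}_{s,i}^{E}$ back to the unconditional tier-$i$ distance PDF.

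The main obstacle I anticipate is \textbf{justifying the independence/product structure} $\mathbb{P}_{m,ass,i}^{E}=\mathbb{P}_{m,i}^{E}\mathbb{P}_{ass,i}^{E}$, i.e. that whether the serving satellite sits in the main-lobe cap is independent of which tier wins the nearest-satellite competition. This needs care: strictly, $\{\Phi(\mathcal{A}_m^i)>0\}$ is \emph{not} independent of $S=i$ (having a satellite very close makes tier $i$ more likely to win), so the clean factorization in the proposition implicitly treats $\mathbb{P}_{m,i}^{E}$ and $\mathbb{P}_{s,i}^{E}$ as the \emph{marginal} cap-occupancy probabilities rather than as conditional ones. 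I would resolve this by being explicit that \eqref{11}--\eqref{12} are obtained by restricting the support of $g_i$ to $[h_i,r_m^i]$ and $[r_m^i,r_{max}^i]$ respectively and renormalizing each piece by its own mass, and then \emph{defining} $\mathbb{P}_{m,ass,i}^{E}:=\int_{h_i}^{r_m^i} g_i(r)\,dr$, after which one verifies — using the void probability of the cap — that this integral indeed equals $\mathbb{P}_{m,i}^{E}\mathbb{P}_{ass,i}^{E}$ in the regime where the exponential thinning from the other tiers is negligible over the cap, or more cleanly, one states the factorization as the definition consistent with \eqref{10}. The remaining computations (the cap area, the void probability, and the derivative of the CCDF) are routine and follow the pattern already established in Appendix A.
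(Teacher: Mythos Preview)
Your proposal is correct and follows essentially the same route as the paper's Appendix~B: write the conditional CDF as $1-\mathbb{P}\{R_i>r,\Phi(\mathcal{A}_m^i)>0,S=i\}/\mathbb{P}\{\Phi(\mathcal{A}_m^i)>0,S=i\}$, recognize the numerator as $\int_r^{r_m^i}g_i(u)\,du$ with $g_i$ the integrand of \eqref{10}, and differentiate.

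Your concern about the factorization $\mathbb{P}_{m,ass,i}^{E}=\mathbb{P}_{m,i}^{E}\mathbb{P}_{ass,i}^{E}$ is well placed and in fact sharper than what the paper says: $\{\Phi(\mathcal{A}_m^i)>0\}$ and $\{S=i\}$ are \emph{not} independent, so the product is not an identity but a convention the paper adopts without comment. The clean resolution is exactly the one you propose --- take $\mathbb{P}_{m,ass,i}^{E}:=\int_{h_i}^{r_m^i}g_i(r)\,dr$ as the definition of the normalizer --- and then observe that in every downstream use (Theorem~1, Appendix~D) this constant is immediately multiplied back against the conditional density, so the product form cancels and only $g_i$ survives. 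In other words, the factorization is harmless bookkeeping rather than a probabilistic claim, and your write-up is actually more careful than the paper's on this point.
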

\begin{proof}
    Please refer to Appendix B.
\end{proof}
\subsection{Aggregated Interference Power and Coverage Probability}
In the nearest satellite UAP system,  assuming the serving satellite resides in the tier-$i$, the aggregated interference power follows two scenarios. If the serving satellite resides in $\mathcal{A} _{m}^{i}$, interfering satellites may reside in $\mathcal{A} _{m}^{i}$ and $\mathcal{A} _{s}^{i}$. Conversely, when the serving satellite is within $\mathcal{A} _{s}^{i}$, interfering satellites in the $k$-th tier ($k < i$) reside in $\mathcal{A} _{s}^{i}$, while interfering satellites in the $k$-th tier, (where $k \ge i$) may reside in either $\mathcal{A} _{m}^{i}$ or $\mathcal{A} _{s}^{i}$. Therefore, the
interference power when the serving satellite resides in $\mathcal{A} _{m}^{i}$, is given by
\begin{equation}
\label{13}
\begin{aligned}
&I_{ml}^i = \!\!
 \sum\limits_{k = 1}^K \!\!{\left( \!\!{\sum\limits_{{X'} \in \Phi _m^k\backslash {X_i}} \!\!\!{\!\!\!\!\!{p_k}{H_{{X'}}}{G_{m,k}}\!R_{{X'}}^{ - \alpha }} \! + \!\!\sum\limits_{{X'} \in \Phi _s^k} {\!\!\!{p_k}{H_{{X'}}}{G_{s,k}}R_{{X'}}^{ - \alpha }} } \right)}.
\end{aligned}
\end{equation}
The interference power when the serving satellite resides in $\mathcal{A} _{s}^{i}$, is given by
\begin{equation}
\label{14}
\begin{aligned}
&{I_{sl}^i} \!\!=\!\!\! \sum\limits_{k = i + 1}^K \!{\sum\limits_{{X'} \in \Phi _m^k\backslash {X_i}} {\!\!\!\!\!\!{p_k}{H_{{X'}}}{G_{m,k}}R_{{X'}}^{ - \alpha }} }  \!\!+ \!\!\sum\limits_{k = 1}^K {\sum\limits_{{X'} \in \Phi _s^k} {\!\!\!{p_k}{H_{{X'}}}{G_{s,k}}R_{{X'}}^{ - \alpha }} },
\end{aligned}
\end{equation}
where $X_i$ represents the serving satellite, and ${X^{'}}$ represents interfering satellites in each tier. ${R_{X^{'}}^{-\alpha}}$ represent distances between the typical user and interfering satellites.
\begin{lemma}
Under the condition of accessing the $i$-th tier, the LT of the aggregated interference power when the serving satellite is in $\mathcal{A} _{m}^{i}$ and $\mathcal{A} _{s}^{i}$, respectively, is expressed as
\begin{equation}
\label{15}
\begin{aligned}
&{{\cal L}_{I_{_{ml}}^i}}\left( {s_m^i|\Phi \left( {{\cal A}_m^i} \right) > 0, S = i} \right) = \\
&\exp \!\! \left( { \!\!- 2\pi \! \! \sum\limits_{k = 1}^K {{\! \lambda _S^k}\frac{{R_S^k}}{{{R_E}}}} } \!\!\right.\left(\!\! {\int_{U_{LO}^1\left ( r \right )}^{U_{UP}^1\left ( r \right )} {\!\! \! \!\!\left( {\!1\!\! - \! {{\left( {1 \!+ \! s_m^i{p_k} \! {G_{m,k}}r_I^{ - \alpha } \! \beta } \right)}^{ \! - \chi }}}\! \right)\!{r_I}d{r_I}} } \right. \\
& + 
\left. {\left. {{\rm{          }}\int_{r_m^k}^{r_{\max }^k} {\left( {1 - {{\left( {1 + s_m^i{p_k}{G_{s,k}}r_I^{ - \alpha }\beta } \right)}^{ - \chi }}} \right){r_I}d{r_I}} } \right)} \right),
\end{aligned}
\end{equation}
and 
\begin{equation}
\label{16}
\begin{aligned}
&{{\cal L}_{I_{_{sl}}^i}}\left( {s_s^i| \Phi \left( {{\cal A}_m^i} \right) = 0, S = i} \right) = \\
&\exp\!\! \left( { \!\!- 2\pi \!\!\sum\limits_{k = 1}^K {{\!\lambda _S^k}\frac{{R_S^k}}{{{R_E}}}\!\!\int_{U_{UP}^1 \left ( r \right )}^{U_{UP}^2 \left ( r \right )} \!\!\!\!{\left(\!\! {1 \!  - \!\!{{\left( {1 \!+ \! s_s^i{p_k}{G_{s,k}}r_I^{ - \alpha }\!\beta } \right)}^{ \!- \chi }}} \!\right) \!{r_I}d{r_I}} } }\!\! \right)\\
&\exp \!\!\left(\!\! { - 2\pi \!\!\sum\limits_{k = 1}^K {{\!\lambda _S^k}\!\frac{{R_S^k}}{{{R_E}}}\!\!\int_{U_{LO}^1 \left ( r \right )}^{U_{UP}^1 \left ( r \right )} \!\!{\!\!\!\left( \!{1 \!-\! {{\left( {1 \!+\! s_s^i{p_k}{G_{m,k}}r_I^{ - \alpha }\!\beta } \!\right)}^{ - \chi }}} \!\right)\!{r_I}d{r_I}} } } \!\right),
\end{aligned}
\end{equation}
where $U_{UP}^1\left ( r \right ) ={\rm{max}}\left \{ r_{m}^{k},r \right \} $, $U_{UP}^ 2\left ( r \right ) ={\rm{max}}\left \{ r_{max}^{k},r \right \} $, $U_{LO}^1\left ( r \right ) = {\rm{max}}\left \{ h_{k},r  \right \} $, ${s_m^i} = qA\gamma _{th}^i{\left( {{p_i}{G_{m,i}}} \right)^{ - 1}}{r^\alpha }$, ${s_s^i} = qA\gamma _{th}^i{\left( {{p_i}{G_{s,i}}} \right)^{ - 1}}{r^\alpha }$, $A = \frac{{\chi {{\left( {\chi !} \right)}^{ - \frac{1}{\chi }}}}}{\beta }$.
\end{lemma}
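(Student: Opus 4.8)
\emph{Proof strategy.} The plan is to evaluate both Laplace transforms directly from the definition $\mathcal{L}_I(s)=\mathbb{E}[e^{-sI}]$, exploiting the mutual independence of the per-tier point processes $\Phi_m^k,\Phi_s^k$ and of the Shadowed-Rician fading across links. Starting from the interference expression (\ref{13}) in Scenario One --- and similarly from (\ref{14}) in Scenario Two --- I would first condition on the positions of all interfering satellites and average over the i.i.d.\ channel gains. Since $H_{X'}$ is approximated by a $\mathrm{Gamma}(\chi,\beta)$ variable with density (\ref{4}), its Laplace transform is $\mathbb{E}[e^{-aH}]=(1+a\beta)^{-\chi}$, so each factor $e^{-s p_k H_{X'} G_{\cdot,k} R_{X'}^{-\alpha}}$ becomes $(1+s p_k G_{\cdot,k} R_{X'}^{-\alpha}\beta)^{-\chi}$ after the fading average.

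Next I would invoke the probability generating functional of a Poisson point process, $\mathbb{E}[\prod_{x\in\Phi}f(x)]=\exp(-\int(1-f(x))\Lambda(dx))$, applied separately to $\Phi_m^k$ and $\Phi_s^k$ for every tier $k$. The crucial conditioning is the nearest-satellite event: given that the serving satellite in tier $i$ is the globally nearest satellite, at distance $r$, Slivnyak's theorem shows that the residual process in each tier $k$ is again a homogeneous PPP on the sphere of radius $R_S^k$ but containing no point at Euclidean distance below $r$ from the typical user (and none below $h_k$ by geometry); this also absorbs the removal of the serving point $X_i$ from the interference sums. I would then reduce the surface integral to a one-dimensional integral in the contact distance $r_I$ via the same spherical change of variables used in Appendix~A, under which the projected intensity of tier $k$ on $[h_k,r_{\max}^k]$ equals $2\pi\lambda_S^k\frac{R_S^k}{R_E}r_I\,dr_I$, with the main-lobe cap corresponding to $r_I\in[h_k,r_m^k]$ and gain $G_{m,k}$, and the side-lobe ring to $r_I\in[r_m^k,r_{\max}^k]$ and gain $G_{s,k}$.

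What remains is the bookkeeping of the integration limits in the two scenarios. For Scenario One (serving satellite in $\mathcal{A}_m^i$, hence $r\le r_m^i$), the void of radius $r$ only truncates the main-lobe piece, giving limits $U_{LO}^1(r)=\max\{h_k,r\}$ to $U_{UP}^1(r)=\max\{r_m^k,r\}$ for the $G_{m,k}$ term and the full ring $[r_m^k,r_{\max}^k]$ for the $G_{s,k}$ term; re-exponentiating and collecting the tier sum yields (\ref{15}). For Scenario Two (serving satellite in $\mathcal{A}_s^i$, hence $r> r_m^i$), the void can additionally reach into the side-lobe ring, so the $G_{s,k}$ term runs from $U_{UP}^1(r)$ to $U_{UP}^2(r)=\max\{r_{\max}^k,r\}$ and the $G_{m,k}$ term from $U_{LO}^1(r)$ to $U_{UP}^1(r)$ --- the latter vanishing whenever $r_m^k\le r$, consistent with the remark in the text that only tiers $k\ge i$ contribute main-lobe interference here; this gives (\ref{16}). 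Finally I would substitute $s=s_m^i$ and $s=s_s^i$, the arguments at which the Laplace transform is invoked inside the coverage computation (the constant $A=\chi(\chi!)^{-1/\chi}/\beta$ and the index $q$ originating from an Alzer-type expansion of $\mathbb{P}[H>x]$ for a $\mathrm{Gamma}(\chi,\beta)$ gain), to reach the stated forms. I expect the main obstacle to be precisely this last accounting step: correctly deciding, per scenario and per tier, which annular region survives the nearest-satellite void and whether a main-lobe or a side-lobe gain applies, since a misplaced $\max$ would alter the exclusion region and hence the interference law.
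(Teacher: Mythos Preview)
Your proposal is correct and follows essentially the same route as the paper's proof in Appendix~C: compute $\mathbb{E}[e^{-sI}]$ by first averaging out the independent $\mathrm{Gamma}(\chi,\beta)$ fading to obtain the factor $(1+s p_k G_{\cdot,k} r_I^{-\alpha}\beta)^{-\chi}$, then apply the PGFL of the PPP together with the spherical change of variables that yields the radial intensity $2\pi\lambda_S^k\tfrac{R_S^k}{R_E}r_I\,dr_I$. The paper compresses all of this into a single step labeled ``$(a)$ follows from the PGFL of the PPP'', whereas you make the use of Slivnyak's theorem and the per-tier void of radius $r$ explicit, and you correctly observe that the $\max$ functions in $U_{LO}^1,U_{UP}^1,U_{UP}^2$ automatically collapse the main-lobe integral to zero whenever $r\ge r_m^k$, which is exactly how the formula in~(\ref{16}) accommodates the tier-dependent exclusion described around~(\ref{14}).
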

\begin{proof}
  Please refer to Appendix C.  
\end{proof}
\begin{theorem}
Under the paradigm of the nearest satellite UAP, the CP is given in (\ref{17}),
\begin{figure*}[t]
\setlength{\abovecaptionskip}{-1.5cm}
\setlength{\belowcaptionskip}{-0.5cm}
\normalsize
\begin{equation}
\begin{aligned}
&\mathbb{P}_C^E = \sum\limits_{i = 1}^K {\left( \mathbb{P}{\left\{ {{\rm{SINR}}_i^E > \gamma _{th}^i|\Phi \left( {{\cal A}_m^i} \right) > 0,S = i} \right\} \mathbb{P}_{m,ass,i}^{E} } \right.}  + 
\mathbb{P}\left. { \left\{ {{\rm{SINR}}_i^E > \gamma _{th}^i|\Phi \left( {{\cal A}_m^i} \right) = 0,S = i} \right\} \mathbb{P}_{s,ass,i}^{E}} \right)\\
& = \!\!\sum\limits_{i = 1}^K \! {\left( \! {\int_{{h_i}}^{r_m^i} \!\!\! {\Omega \left( {I_{ml}^i,s_m^i} \right) \! 2\lambda _S^i\pi \frac{{R_S^i}}{{{R_E}}}r {e^{\!\!  - \sum\limits_{k \in K} {{\bf{1}}\left( {r \ge {h_k}} \right)\lambda _S^k\pi \frac{{R_S^k}}{{{R_E}}}\left( {{r^2} - h_k^2} \right)} }}dr} } \right.}  \!\! + \! \!
\left. { \int_{r_m^i}^{r_{\max }^i} \!\!\!{\Omega \left( {I_{sl}^i,s_s^i} \right)2\lambda _S^i\pi \frac{{R_S^i}}{{{R_E}}}r{e^{\!\! - \!\sum\limits_{k \in K} {{\! \bf{1}}\left( {r \ge {h_k}} \right)\lambda _S^k\pi \frac{{R_S^k}}{{{R_E}}}\left( {{r^2} - h_k^2} \right)} }}dr} } \right). \label{17}
\end{aligned}
\end{equation}
\hrulefill
\end{figure*} 
where
\begin{equation}
    \begin{aligned}
        \Omega \left( {I,s} \right) = 1 - \sum\limits_{q = 0}^\infty  {\frac{{\chi !}}{{q!\left( {\chi  - q} \right)}}{{\left( { - 1} \right)}^q}{e^{ - s{\sigma ^2}}}} {{\cal L}_I}\left( s \right), \nonumber
    \end{aligned}
\end{equation}
${{\cal L}_{I_{_{ml}}^i}}\!\!\left( {s_m^i|\Phi \!\left( {{\cal A}_m^i} \right) \!>\! 0,S\! = \!i} \right)$, and ${{\cal L}_{I_{_{sl}}^i}} \!\!\left( {s_s^i|\Phi \!\left( {{\cal A}_m^i} \right) \!=\! 0,S\! =\! i} \right)$ are given in Lemma 1.
\end{theorem}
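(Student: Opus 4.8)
The plan is to start from the definition of the coverage probability in \eqref{6} and refine it by a second application of the law of total probability. Conditioned on the serving tier $S=i$, the nearest tier-$i$ satellite lies either in the main-lobe cap $\mathcal{A}_m^i$ (event $\Phi(\mathcal{A}_m^i)>0$, probability $\mathbb{P}_{m,i}^E$) or in the side-lobe ring $\mathcal{A}_s^i$ (event $\Phi(\mathcal{A}_m^i)=0$, probability $\mathbb{P}_{s,i}^E=1-\mathbb{P}_{m,i}^E$). Hence $\mathbb{P}_C^E=\sum_{i=1}^K\big(\mathbb{P}\{\mathrm{SINR}_i^E>\gamma_{th}^i\mid\Phi(\mathcal{A}_m^i)>0,S=i\}\,\mathbb{P}_{m,ass,i}^E+\mathbb{P}\{\mathrm{SINR}_i^E>\gamma_{th}^i\mid\Phi(\mathcal{A}_m^i)=0,S=i\}\,\mathbb{P}_{s,ass,i}^E\big)$, which is exactly the first line of \eqref{17}; the remaining work is to evaluate the two conditional coverage probabilities.

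Next, fix the main-lobe scenario. Since $\mathrm{SINR}_i^E=p_iG_{m,i}H_{X_i}R_i^{-\alpha}/(I_{ml}^i+\sigma^2)$, the coverage event is equivalent to $H_{X_i}>\gamma_{th}^i(I_{ml}^i+\sigma^2)R_i^\alpha/(p_iG_{m,i})$. Conditioning on $R_i=r$ and on the interference field, I would invoke the Gamma approximation \eqref{4} of the Shadowed-Rician channel gain together with the tight CCDF bound (Alzer's inequality) $\mathbb{P}(H_{X_i}>h)\approx 1-(1-e^{-Ah})^{\chi}$, with $A=\chi(\chi!)^{-1/\chi}/\beta$, and expand the power by the (generalized) binomial series to obtain a sum of terms proportional to $(-1)^{q+1}\binom{\chi}{q}e^{-qAh}$. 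Substituting $h=\gamma_{th}^i(I_{ml}^i+\sigma^2)r^\alpha/(p_iG_{m,i})$ and defining $s_m^i=qA\gamma_{th}^i r^\alpha/(p_iG_{m,i})$, the $q$-th term, after taking expectation over the (channel-independent) interference, becomes $e^{-s_m^i\sigma^2}\,\mathbb{E}\!\left[e^{-s_m^iI_{ml}^i}\right]=e^{-s_m^i\sigma^2}\,\mathcal{L}_{I_{ml}^i}\!\left(s_m^i\mid\Phi(\mathcal{A}_m^i)>0,S=i\right)$, which is precisely the Laplace transform supplied by Lemma~1. Collecting the terms yields $\Omega(I_{ml}^i,s_m^i)$ for the conditional coverage probability at distance $r$. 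The side-lobe scenario is identical after replacing $G_{m,i}$ by $G_{s,i}$, $s_m^i$ by $s_s^i$, and $\mathcal{L}_{I_{ml}^i}$ by $\mathcal{L}_{I_{sl}^i}$, giving $\Omega(I_{sl}^i,s_s^i)$.

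Finally, I would decondition on $R_i$ using Proposition~2. Multiplying the main-lobe conditional coverage probability by the weight $\mathbb{P}_{m,ass,i}^E$ and integrating against $f_{R_i}^E(r\mid\Phi(\mathcal{A}_m^i)>0,S=i)$ over $r\in[h_i,r_m^i]$ cancels the normalizing denominator $\mathbb{P}_{m,ass,i}^E$ in \eqref{11}, leaving $\int_{h_i}^{r_m^i}\Omega(I_{ml}^i,s_m^i)\,2\lambda_S^i\pi\tfrac{R_S^i}{R_E}\,r\,e^{-\sum_{k}\mathbf{1}(r\ge h_k)\lambda_S^k\pi\frac{R_S^k}{R_E}(r^2-h_k^2)}\,dr$; the side-lobe term is obtained the same way from \eqref{12} with integration range $r\in[r_m^i,r_{\max}^i]$. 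Summing over $i\in\{1,\dots,K\}$ reproduces \eqref{17}.

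\textbf{Main obstacle.} The bookkeeping above is routine once Lemma~1 is granted; the delicate points are two. First, one must justify that, after conditioning on the nearest tier-$i$ satellite being at distance $r$, the remaining interfering satellites still form (independent) PPPs on their spheres — i.e. the reduced Palm distribution coincides with the original law (Slivnyak-type argument) — so that the exclusion of the serving satellite appears only through the lower integration limits $U_{LO}^1,U_{UP}^1$ used in Lemma~1, and the interference is independent of the serving-satellite channel gain as assumed in the deconditioning. Second, the Gamma/Alzer step is the only non-exact ingredient, so one should note (following the cited Shadowed-Rician literature) that it is a tight approximation — and, with the chosen constant $A$, a bound in the relevant regime — so that \eqref{17} holds as an accurate approximation of the true coverage probability. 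Everything else is substitution of Propositions~1--2 and Lemma~1 into the total-probability expansion.
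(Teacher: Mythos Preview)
Your proposal is correct and follows essentially the same route as the paper's proof (Appendix~D): total-probability split over the serving tier and then over the main-/side-lobe events, Alzer-type CCDF approximation of the Gamma-distributed channel gain to produce the $\Omega(\cdot,\cdot)$ function via the interference Laplace transforms of Lemma~1, and deconditioning on $R_i$ via the densities of Proposition~2 so that the normalizers $\mathbb{P}_{m,ass,i}^E$, $\mathbb{P}_{s,ass,i}^E$ cancel. Your explicit remarks on the Slivnyak/Palm independence and on the approximate nature of the Alzer step are points the paper leaves implicit, but otherwise the argument is the same.
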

\begin{proof}
  Please refer to Appendix D.  
\end{proof}
\vspace{-2mm}
\subsection{Non-handover Probability}
\captionsetup{font={scriptsize}}
\begin{figure}
\begin{center}
\setlength{\abovecaptionskip}{+0.2cm}
\setlength{\belowcaptionskip}{-0.8cm}
\centering
  \includegraphics[width=2.85in, height=2.0in]{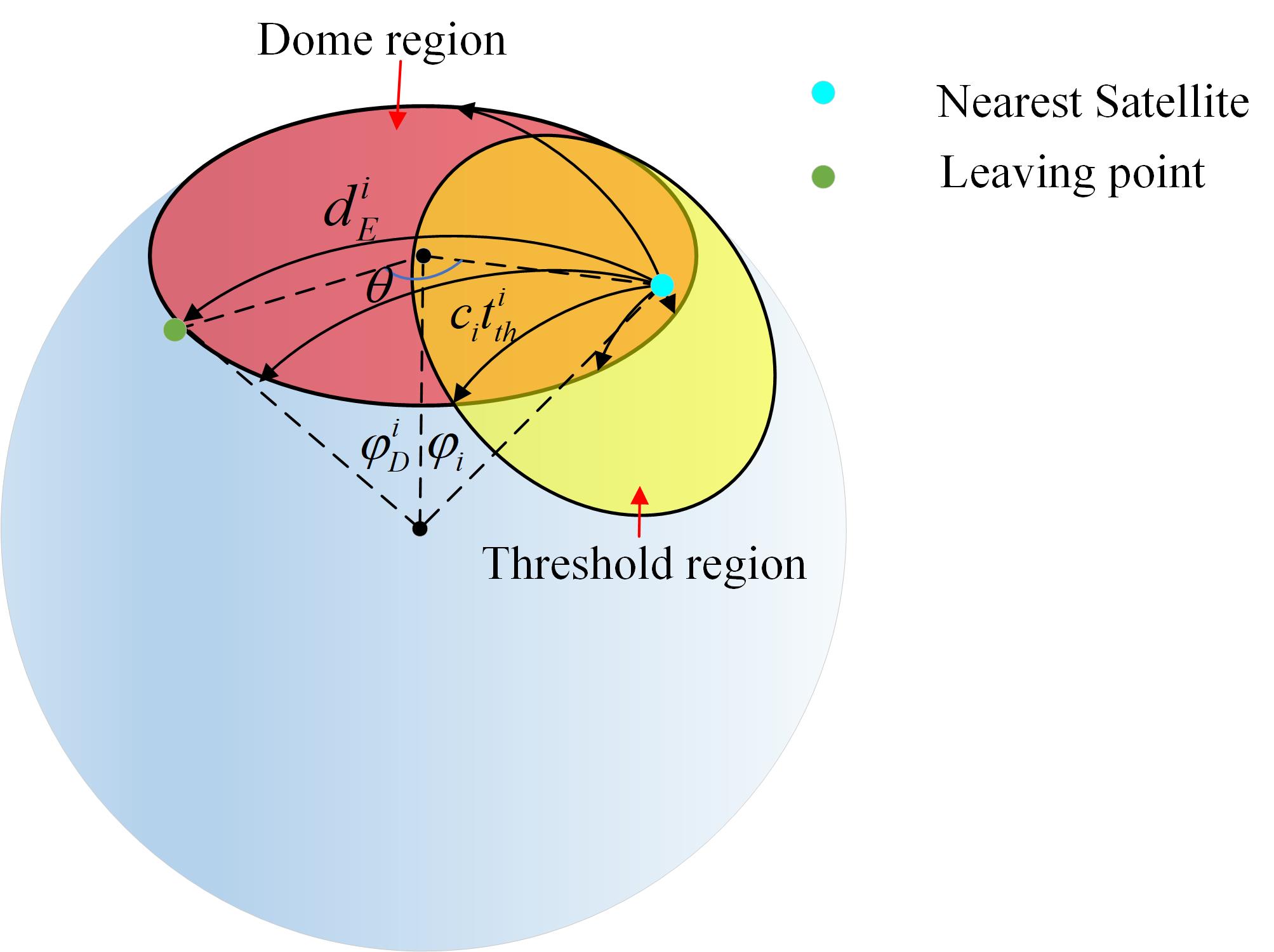}
\renewcommand\figurename{Fig.}
\caption{\scriptsize Graphical representation of the NHP.}
\label{fig: Fig3}
\end{center}
\end{figure}
We define the NHP as the joint probability that a typical user connected to the nearest satellite experiences acceptable communication quality without needing a satellite handover. This probability is defined by the simultaneous event where the typical user receives a signal with an SINR surpassing a threshold, and the time consumed for traversing beyond the dome region of the nearest satellite exceeds the threshold time. 
Let $t_E^i$ denotes the time required for the nearest satellite in the $i$-th tier to exit the dome region ${\varphi _D^i}$ in any direction (see Fig. \ref{fig: Fig3}), representing the switching duration. The threshold time for satellite switching is $t_{th}^i$. Assuming the longitude angle $\theta$ between the satellite's exit direction and its initial motion direction is uniformly distributed with a PDF of $1/ \pi$. By calculating the arc length $d_E^i$ on the spherical surface $R_{S}^{i}$ corresponding to the nearest satellite exiting the dome region and the satellite's velocity $c_{i}$, we determine $t_E^i$.
\begin{theorem}
The NHP of the multi-tier satellite communication system with the nearest satellite UAP is given by
\begin{equation}
\label{18}
\begin{aligned}
&\mathbb{P}_{NH}^E \!=\! \sum\limits_{i = 1}^K {\left( \! {\mathbb{P} \!\left\{ {\mathrm{SINR} _i^E \!\ge \!\gamma _{th}^i,t_E^i \!\ge \! t_{th}^i, \Phi \left( {{\cal A}_m^i} \!\right)\! > \!0| S \!= \!i} \right\}} \right.}\\
&+\left. {\mathbb{P}\left\{ {\mathrm{SINR} _i^E \ge \gamma _{th}^i,t_E^i \ge t_{th}^i, \Phi \left( {{\cal A}_m^i} \right) = 0| S = i} \right\} } \right) \mathbb{P}_{ass,i}^E,
\end{aligned}
\end{equation}
where 
\begin{equation}
\label{19}
\begin{aligned}
&\mathbb{P}\left\{ {{\rm{SINR}}_i^E \ge \gamma _{th}^i,t_E^i \ge t_h^i,\Phi \left( {{\cal A}_m^i} \right) > 0 | S = i} \right\}\\
& =\!  \left\{ \begin{array}{l}
\!\!\! \! \int_{{h_i}}^{r_m^i} {\Omega \left( {I_{ml}^i,s_m^i} \right){f_r^E}\left( {r |S = i} \right)dr}, \ \ \ \ {\rm{if }} \  {p_i}\left( \theta  \right) > 1, \\
\!\!\! \!  \int_{R_{m,dw}^1}^{R_{m,up}^1} {\Omega \left( {I_{ml}^i,s_m^i} \right){f_r^E}\left( {r|S = i} \right)dr}, \ {\rm{if  }} \  \left| {{p_i}\left( \theta  \right)} \right| \le 1, \\
\!\!\! \!  0, \ \ \ \ \ \ \ \ \ \ \ \ \ \ \ \ \ \ \ \ \ \ \ \ \ \ \ \ \ \ \ \ \ \ \ \  \ \ \ \ \ \ {\rm{if }} \ {p_i}\left( \theta  \right) <  - 1,
\end{array} \right.
\end{aligned}
\end{equation}
\begin{equation}
\label{20}
    \begin{aligned}
&\mathbb{P}\left\{ {{\rm{SINR}}_i^E \ge \gamma _{th}^i,t_E^i \ge t_h^i,\Phi \left( {{\cal A}_m^i} \right) = 0 | S = i} \right\}\\
& = \left\{ \begin{array}{l}
\!\!\! \! \int_{{h_i}}^{r_m^i} {\Omega \left( {I_{sl}^i,s_s^i} \right){f_r^E}\left( {r|S = i} \right)dr}, \ \ \ \ {\rm{if }} \ {p_i}\left( \theta  \right) > 1, \\
\!\!\! \! \int_{R_{s,dw}^1}^{R_{s,up}^1} {\Omega \left( {I_{sl}^i,s_s^i} \right){f_r^E}\left( {r| S = i} \right)dr}, \ {\rm{if  }} \ \left| {{p_i}\left( \theta  \right)} \right| \le 1, \\
0, \ \ \ \ \ \ \ \ \ \ \ \ \ \ \ \ \ \ \ \ \ \ \ \ \ \ \ \ \ \ \ \ \ \ \ \  \ \  {\rm{if }} \ {p_i}\left( \theta  \right) <  - 1,
\end{array} \right.
    \end{aligned}
\end{equation}
$\mathbb{P}_{ass,i}^{E}$ is given in Proposition 1, $\left( {R_{m,dw}^1 \le r \le R_{m,up}^1} \right) = \left( {R_{reg}^1 \cup R_{reg}^2} \right) \cap \left( {{h_i} \le r \le r_m^i} \right)$, $\left( {R_{s,dw}^1 \le r \le R_{s,up}^1} \right) = \left( {R_{reg}^1 \cup R_{reg}^2} \right) \cap \left( {r_m^i \le r \le r_{\max }^i} \right)$, $R_{reg}^1 = \left( {{\vartheta ^ + }\left( \theta  \right) < r} \right) \cup \left( {{\varsigma ^ + }\left( \theta  \right) > r} \right) \cup \left( {{\Psi ^ + }\left( \theta  \right) < r} \right)$, $R_{reg}^2 = \left( {{\vartheta ^ - }\left( \theta  \right) < r} \right) \cup \left( {{\varsigma ^ - }\left( \theta  \right) > r} \right) \cup \left( {{\Psi ^ - }\left( \theta  \right) < r} \right)$. ${\vartheta ^ + }\left( \theta  \right)$, ${\varsigma ^ + }\left( \theta  \right) $, ${\Psi ^ + }\left( \theta  \right)$, ${\vartheta ^ - }\left( \theta  \right)$, ${\varsigma ^ - }\left( \theta  \right) $, ${\Psi ^ - }\left( \theta  \right)$ are given in Appendix D. ${f_{{r}}^E} \left ( r\right)$ represents the PDF of the nearest distance distribution in tier $i$. 
\end{theorem}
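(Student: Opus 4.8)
The plan is to decompose the joint event $\{\mathrm{SINR}_i^E \ge \gamma_{th}^i,\ t_E^i \ge t_{th}^i\}$ conditioned on accessing tier $i$, and then further condition on whether the serving satellite lies in the main-lobe cap $\mathcal{A}_m^i$ or the side-lobe ring $\mathcal{A}_s^i$, exactly mirroring the two-scenario structure already used for the CP in Theorem 2. First I would write $\mathbb{P}_{NH}^E = \sum_{i=1}^K \mathbb{P}\{\mathrm{SINR}_i^E \ge \gamma_{th}^i,\ t_E^i \ge t_{th}^i \mid S=i\}\,\mathbb{P}_{ass,i}^E$ by the law of total probability over the accessed tier, with $\mathbb{P}_{ass,i}^E$ taken from Proposition 1. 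Inside each tier, split on $\Phi(\mathcal{A}_m^i)>0$ versus $\Phi(\mathcal{A}_m^i)=0$; in the former case the desired-signal antenna gain is $G_{ml}^i$ and the relevant interference LT is $\mathcal{L}_{I_{ml}^i}$, in the latter it is $G_{sl}^i$ and $\mathcal{L}_{I_{sl}^i}$. This gives the outer structure of (\ref{18}).

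Next I would handle the coupling between the SINR event and the handover-time event $t_E^i \ge t_{th}^i$ through the common random variable $R_i$ (the serving distance). The key geometric observation is that, for a fixed exit longitude angle $\theta$ (uniform on $[0,\pi]$ with density $1/\pi$), the condition $t_E^i \ge t_{th}^i$ is equivalent to a deterministic constraint on the arc length $d_E^i$ the satellite must traverse to leave the dome region $\varphi_D^i$, which in turn — via spherical geometry on the sphere of radius $R_S^i$ and the law of cosines relating $R_i$ to the sub-satellite geometry — translates into $R_i$ lying in an explicit interval (or the whole range, or the empty set) depending on the sign of an auxiliary quantity $p_i(\theta)$. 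This is precisely the three-way case split in (\ref{19})–(\ref{20}): if $p_i(\theta)>1$ the time constraint is automatically satisfied for all admissible $r$, if $|p_i(\theta)|\le 1$ it restricts $r$ to $[R_{m,dw}^1, R_{m,up}^1]$ (resp. the side-lobe version) obtained by intersecting the handover-admissible region $R_{reg}^1 \cup R_{reg}^2$ with the lobe's distance range $[h_i, r_m^i]$ (resp. $[r_m^i, r_{max}^i]$), and if $p_i(\theta)<-1$ the event is impossible. Given this reduction to an $r$-interval, the SINR-coverage contribution for a fixed $r$ is just $\Omega(I_{ml}^i, s_m^i)$ (resp. $\Omega(I_{sl}^i, s_s^i)$) — the same Gamma-approximation-plus-Laplace-transform expression established in the proof of Theorem 2, using Lemma 1 — and I integrate it against the conditional serving-distance PDF $f_r^E(r\mid S=i)$ from Proposition 2.

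The main obstacle is the geometric step: deriving the six boundary functions $\vartheta^{\pm}(\theta)$, $\varsigma^{\pm}(\theta)$, $\Psi^{\pm}(\theta)$ and the threshold quantity $p_i(\theta)$ from the requirement that the satellite's ground track of length $c_i t_{th}^i$ keeps it inside the dome of Earth-contact angle $\varphi_D^i$. This requires carefully setting up the spherical triangle formed by the sub-satellite point at exit, the sub-satellite point at the current instant, and the typical user's nadir, applying the spherical law of cosines to express the exit-boundary contact angle as a function of $\theta$ and the traversed arc, and then inverting the monotone relation between that contact angle and the slant distance $R_i$ via (\ref{3})-type law-of-cosines identities. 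Once these boundary functions are in hand, the remaining work is routine: substitute into (\ref{18}), average over $\theta \sim \mathrm{Unif}[0,\pi]$ if one wants an unconditioned expression, and invoke the already-proven CP machinery (Theorem 2, Lemma 1, Propositions 1–2) termwise. I would also note as a minor check that setting $t_{th}^i \to 0$ should recover the CP expression (\ref{17}), since then $p_i(\theta)>1$ always holds and the integrals collapse to those in Theorem 2.
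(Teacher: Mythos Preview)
Your proposal is correct and follows essentially the same route as the paper: decompose over the accessed tier and the main-lobe/side-lobe split, express the arc length $d_E^i$ via the spherical law of cosines on the triangle with vertices at the current sub-satellite point, the exit point, and the user's nadir, then invert the relation between the Earth-centered contact angle $\varphi_E^i$ and the slant distance $r$ to turn the time constraint into the three-case interval constraint on $r$, and finally integrate $\Omega(I,s)$ against the conditional serving-distance density from Proposition~2. The only slip is bibliographic: the CP machinery you invoke is Theorem~1 (not ``Theorem~2,'' which is the statement you are proving), and your sanity check $t_{th}^i\to 0$ recovering (\ref{17}) is a good one that the paper does not spell out.
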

\begin{proof}
  Please refer to Appendix E.  
\end{proof}
\subsection{Delay Outage Probability}
We define DOP as the probability that the propagation delay for the communication between the serving satellite and the typical user is less than the propagation delay threshold, $T_D^i$, when the user accesses the serving satellite using either the nearest or the max-SINR satellite UAP. In this section, we consider the nearest satellite UAP.
\begin{theorem}
    The DOP of the multi-tier satellite communication system with the nearest satellite UAP is given by
    \begin{equation}
    \label{21}
        \begin{aligned}
&\mathbb{P}_{DO}^E = \sum\limits_{i = 1}^K  \mathbb{P}\left\{ {{R_i} \le cT_D^i|S = i} \right\}\mathbb{P}_{ass,i}^E \\
& = \sum\limits_{i = 1}^K {\left( {1 - \exp \left( { - \lambda _S^i\pi \frac{{R_S^i}}{{{R_E}}}\left( {{{\left( {cT_D^i} \right)}^2} - h_i^2} \right)} \right)} \right)P_{ass,i}^E},
        \end{aligned}
    \end{equation}
    where the $T_{D}^{i}$ is the time propagation delay threshold for the $i$-th tier nearest satellite UAP system, $R_i$ represents the $i$-th nearest distance, $c$ is the speed of light, $\mathbb{P}_{ass,i}^E$ is given in Proposition 1. 
\end{theorem}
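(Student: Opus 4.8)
The plan is to reduce \eqref{21} to a single per-tier void (empty-region) probability of the corresponding spherical Poisson process, reusing the slant-distance geometry that already underlies Proposition~1. First I would rewrite the delay-outage event as a distance event: under the nearest-satellite UAP the serving link has length $R_i$ whenever $S=i$, and the one-way propagation delay along that link is $R_i/c$, so the outage occurs precisely when $R_i/c\le T_D^i$, i.e. $R_i\le cT_D^i$. Partitioning the probability space over the mutually exclusive association events $\{S=i\}_{i=1}^{K}$ and using $\mathbb{P}\{S=i\}=\mathbb{P}_{ass,i}^{E}$ from Proposition~1 immediately yields the first equality in \eqref{21}:
\[
\mathbb{P}_{DO}^{E}=\sum_{i=1}^{K}\mathbb{P}\{R_i\le cT_D^i\mid S=i\}\,\mathbb{P}_{ass,i}^{E}.
\]

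The heart of the argument is the conditional CDF $\mathbb{P}\{R_i\le r\mid S=i\}$ evaluated at $r=cT_D^i$. Conditioned on associating with tier $i$, $R_i$ is the distance from the typical user to the nearest point of the tier-$i$ HSPPP $\Phi_S^i$, so $\{R_i>r\}$ is exactly the event that the spherical cap $\mathcal{A}_r^i$ consisting of radius-$R_S^i$ points within Euclidean range $r$ of the user contains no tier-$i$ satellite; by the Poisson void probability this equals $\exp(-\lambda_S^i\,|\mathcal{A}_r^i|)$. I would then recompute $|\mathcal{A}_r^i|$ exactly as in the proof of Proposition~1: from the law of cosines $r^2=(R_S^i)^2+R_E^2-2R_S^iR_E\cos\phi$ one obtains $1-\cos\phi=(r^2-h_i^2)/(2R_S^iR_E)$, whence the spherical-cap area $2\pi(R_S^i)^2(1-\cos\phi)$ collapses to $|\mathcal{A}_r^i|=\pi\frac{R_S^i}{R_E}(r^2-h_i^2)$ for $h_i\le r\le r_{\max}^i$. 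Hence $\mathbb{P}\{R_i\le r\mid S=i\}=1-\exp(-\lambda_S^i\pi\frac{R_S^i}{R_E}(r^2-h_i^2))$, and substituting $r=cT_D^i$, multiplying by $\mathbb{P}_{ass,i}^{E}$, and summing over $i=1,\dots,K$ reproduces the closed form in \eqref{21}.

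I expect the geometric reduction to be the only delicate step: one must justify passing from the three-dimensional event that no tier-$i$ satellite lies within slant range $r$ to the two-dimensional cap measure via the chord/central-angle relation, and keep $cT_D^i$ inside the admissible window $[h_i,r_{\max}^i]$, outside of which the bracketed factor is pinned at $0$ (delay threshold below the orbital altitude) or saturates at the tier-$i$ visibility probability. A secondary subtlety, which I would flag rather than belabor, is that conditioning on $S=i$ also forces the nearest tier-$i$ satellite to be visible, so the exact conditional law of $R_i$ departs slightly from the plain empty-cap complement as $cT_D^i\to r_{\max}^i$; \eqref{21} adopts the empty-cap complement, which is the operative expression for delay thresholds comfortably inside the visibility horizon.
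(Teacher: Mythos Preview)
Your proposal is correct and matches the paper's implicit reasoning: the paper states Theorem~3 without a dedicated proof (no appendix is referenced for it), treating the result as an immediate consequence of the tier-$i$ nearest-distance void probability already computed for Proposition~1, and your derivation supplies precisely those details. You even go a step further than the paper by flagging that the closed form in \eqref{21} replaces the true conditional law $\mathbb{P}\{R_i\le r\mid S=i\}$ by the unconditional tier-$i$ void complement, an approximation the paper adopts silently.
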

\section{Max-SINR Satellite Association Policy in Multi-Tier Satellite Networks}
In this section, we conduct a comprehensive performance analysis of a multi-tier satellite communication network, specifically emphasizing the max-SINR satellite UAP. Our analytical framework is built upon three pivotal components, i.e., an evaluation of association probability, and explorations of CP, NHP, and DOP. These essential elements are seamlessly integrated within this section.
\subsection{Association Probability}
When using the max-SINR satellite UAP, the serving satellite is the satellite that provides the maximum SINR to the typical user. 
In a $K$-tier satellite system, 
the probability that the serving satellite resides in the main-lobe reception region corresponds to the probability that there is at least one satellite in the main-lobe reception region of the dome region associated with $\phi_m^i \left(i = 1,..., K\right)$, which is defined as $\mathbb{P}_m^M = \mathbb{P}\left\{ {\Phi \left( {{{\cal A}_m}} \right) > 0} \right\}$, where ${{\cal A}_m}$ represents the dome region associated with $\phi_m^i \left(i = 1,..., K\right)$. The probability that the serving satellite falls within the side-lobe region is defined as $\mathbb{P}_s^M = \mathbb{P}\left\{ {\Phi \left( {{{\cal A}_m}} \right) = 0} \right\}$.
\begin{proposition}
The probability of the typical user accessing tier $i$ with the max-SINR satellite UAP is given by
\begin{equation}
\label{22}
\begin{aligned}
\mathbb{P}_{ass,i}^{M} = 2{\lambda _S^i}\pi \frac{{R_S^i}}{{{R_E}}}\!\int_{{ h_i}}^{r_{max }^i} \! \! {r{e^{ - \left( {{\psi_i ^{MAX}}\left( r \right) + {\lambda _S^i}\pi \frac{{R_S^i}}{{{R_E}}}\left( {{r^2} - {h_i^2}} \right)} \right)}}dr}, 
\end{aligned}
\end{equation}
where
\begin{equation}
\label{14-1}
\begin{aligned}
&{\psi _i^{MAX}}\left( r \right) = \sum\limits_{ k=1, k \ne i}^{K} {{\lambda _S^k}\pi \frac{{R_S^k}}{{{R_E}}}\varpi \left( {{{\left( {\frac{{{p_k}{G_{m,k}}}}{{{p_i}{G_{m,i}}}}} \right)}^{\frac{1}{\alpha }}}r,{h_k}} \right)}, \\  \nonumber
&\varpi \left( {z,v} \right) = {\bf{1}}\left( {z > v} \right)\left( {{z^2} - {v^2}} \right). \nonumber
\end{aligned}
\end{equation}
Probabilities of the serving satellite residing in the main-lobe reception region and side-lobe reception region, respectively, are given by
\begin{equation}
\label{23}
    \begin{aligned}
    \mathbb{P}_m^M =\mathbb{P}\left\{ {\Phi \left( {{{\cal A}_m}} \right) > 0} \right\} = 1 - \prod\limits_{k = 1}^K {{e^{ - \lambda _S^k2\pi {{\left( {R_S^k} \right)}^2}\left( {1 - \cos \left( {{\phi _m^k}} \right)} \right)}}}, 
    \end{aligned}
\end{equation}
\begin{equation}
\label{24}
    \begin{aligned}
    \mathbb{P}_s^M = \mathbb{P}\left\{ {\Phi \left( {{{\cal A}_m}} \right) = 0} \right\} = \prod\limits_{k = 1}^K {{e^{ - \lambda _S^k2\pi {{\left( {R_S^k} \right)}^2}\left( {1 - \cos \left( {{\phi _m^k}} \right)} \right)}}}. 
    \end{aligned}
\end{equation}
\end{proposition}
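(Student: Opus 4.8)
The plan is to reduce the max-SINR association to a received-power comparison and then evaluate the resulting void probabilities of the (mapped) satellite point processes, reusing the spherical-cap-to-disk mapping already established for Proposition~1. First I would observe that, since the aggregate interference-plus-noise $I+\sigma^{2}$ is (to leading order) common to every candidate satellite — removing a single interferer from the large visible set is negligible, and the fading is identically distributed across satellites so that selection is governed by the long-term received power — the max-SINR serving satellite is the one maximizing $p_{k}G_{k}R_{X}^{-\alpha}$. Restricting to the main-lobe regime (the event $\Phi(\mathcal{A}_{m})>0$, handled in the last step), every tier-$k$ candidate carries the gain $G_{m,k}$, so the strongest candidate within a tier is simply the nearest one, and $\mathbb{P}_{ass,i}^{M}$ is the probability that the nearest tier-$i$ satellite outperforms the nearest satellite of every other tier in received power.

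Next I would condition on the nearest tier-$i$ satellite being at distance $r$. By the mapping of the visible spherical cap onto a disk used in Appendix~A, the number of tier-$k$ satellites within distance $\rho$ of the typical user is Poisson with mean $\lambda_{S}^{k}\pi\frac{R_{S}^{k}}{R_{E}}(\rho^{2}-h_{k}^{2})$ for $\rho\ge h_{k}$ and zero otherwise; hence the PDF of the nearest tier-$i$ distance on $[h_{i},r_{\max}^{i}]$ is $2\lambda_{S}^{i}\pi\frac{R_{S}^{i}}{R_{E}}r\,e^{-\lambda_{S}^{i}\pi\frac{R_{S}^{i}}{R_{E}}(r^{2}-h_{i}^{2})}$. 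Given this distance, a tier-$k$ satellite at distance $\rho$ delivers larger power iff $p_{k}G_{m,k}\rho^{-\alpha}>p_{i}G_{m,i}r^{-\alpha}$, i.e. $\rho<\left(\frac{p_{k}G_{m,k}}{p_{i}G_{m,i}}\right)^{1/\alpha}r$. By the independence of the $K$ tiers and the void probability of a Poisson variable, the probability that no tier-$k$ satellite ($k\ne i$) falls inside this ball is $\exp\!\left(-\lambda_{S}^{k}\pi\frac{R_{S}^{k}}{R_{E}}\varpi\!\left(\left(\frac{p_{k}G_{m,k}}{p_{i}G_{m,i}}\right)^{1/\alpha}r,h_{k}\right)\right)$; taking the product over $k\ne i$ gives $e^{-\psi_{i}^{MAX}(r)}$. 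Multiplying this by the nearest-distance PDF and integrating over $r\in[h_{i},r_{\max}^{i}]$ produces (\ref{22}), the two exponentials combining into the exponent $-\big(\psi_{i}^{MAX}(r)+\lambda_{S}^{i}\pi\frac{R_{S}^{i}}{R_{E}}(r^{2}-h_{i}^{2})\big)$.

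Finally, for (\ref{23})--(\ref{24}) I would argue that — because the main-lobe gain dominates the side-lobe gain strongly enough that the weakest admissible main-lobe link still beats the strongest admissible side-lobe link across all tiers — the event that the max-SINR serving satellite lies in some main lobe coincides with the event that $\mathcal{A}_{m}:=\bigcup_{k=1}^{K}\mathcal{A}_{m}^{k}$ contains at least one satellite. Hence $\mathbb{P}_{s}^{M}=\mathbb{P}\{\Phi(\mathcal{A}_{m})=0\}$; since the tiers are independent and $\mathcal{A}_{m}^{k}$ is a spherical cap of area $2\pi(R_{S}^{k})^{2}(1-\cos\phi_{m}^{k})$, the tier-$k$ count there is Poisson with mean $\lambda_{S}^{k}2\pi(R_{S}^{k})^{2}(1-\cos\phi_{m}^{k})$, so $\mathbb{P}_{s}^{M}=\prod_{k=1}^{K}e^{-\lambda_{S}^{k}2\pi(R_{S}^{k})^{2}(1-\cos\phi_{m}^{k})}$, which is (\ref{24}), and (\ref{23}) follows from $\mathbb{P}_{m}^{M}=1-\mathbb{P}_{s}^{M}$.

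The main obstacle is the pair of reductions at the front: justifying that max-SINR selection collapses to max average received-power selection (resting on the common-interference approximation and on the fading being identically distributed across satellites), and justifying that the presence of any main-lobe satellite forces the serving satellite into a main lobe (which needs the parameter assumption that main-lobe links dominate side-lobe links). Once these are in place, the remainder is the routine void-probability bookkeeping already exercised in the proof of Proposition~1.
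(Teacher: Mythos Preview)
Your proposal is correct and follows essentially the same route as the paper's proof in Appendix~F: the paper also collapses max-SINR selection to a deterministic received-power comparison (it phrases this as ``we assume that the small-scale fading is correlated, thus neglecting its impact when calculating access probabilities''), then conditions on the nearest tier-$i$ distance, applies the same void-probability/exclusion-ball argument over the other tiers, and integrates against the nearest-distance PDF to obtain (\ref{22}); (\ref{23})--(\ref{24}) are handled exactly as you do, via the independent Poisson counts on the caps $\mathcal{A}_{m}^{k}$. Your writeup is in fact more explicit than the paper's about the two reductions (common interference/fading and main-lobe dominance), which the paper simply asserts.
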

\begin{proof}
    Please refer to Appendix F.
\end{proof}

\subsection{Coverage Probability}

\begin{theorem}
With the max-SINR satellite UAP, the comprehensive CP of the system can be articulated as
\begin{equation}
\label{25}
\begin{aligned}
& \mathbb{P}_C^M = \mathbb{P}_m^M \!\! \int_0^\infty \!\!\!\! {\int_{ - \infty }^{ + \infty } \!\! {\frac{{{e^{j\omega y}} \!- \!{e^{j\frac{{\omega y}}{{\max \left( {\delta _{th}^i} \right)}}}}}}{{2\pi j\omega }}{\Theta ^M}\! \left( {\omega ,y} \right) \! {\Upsilon ^M}\! \left( {\omega ,y} \right)d\omega } dy} \\
& + \mathbb{P}_s^M\int_0^\infty  {\int_{ - \infty }^{ + \infty } {\frac{{{e^{j\omega y}} - {e^{j\frac{{\omega y}}{{\max \left( {\delta _{th}^i} \right)}}}}}}{{2\pi j\omega }}{\Theta ^S}\left( {\omega ,y} \right){\Upsilon ^S}\left( {\omega ,y} \right)d\omega } dy},
\end{aligned}
\end{equation}
where $\mathbb{P}_m^M$ and $ \mathbb{P}_s^M$ are given in (\ref{23}), and (\ref{24}), respectively. 
${\Theta ^M}\left( {\omega,y} \right)$, $\Upsilon ^M\left( {\omega,y} \right)$, ${\Theta ^S}\left( {\omega,y} \right)$, and $\Upsilon ^S\left( {\omega,y} \right)$ are given in (\ref{26}),  (\ref{27}), (\ref{28}), and (\ref{29}), respectively \cite{gradshteyn2014table}.   
\begin{figure*}[t]
\setlength{\abovecaptionskip}{-1.5cm}
\setlength{\belowcaptionskip}{-0.5cm}
\normalsize
\begin{equation}
\label{26}
\begin{aligned}
&{\Theta ^M}\left( {\omega ,y} \right) = E\left\{ {{e^{ - j\omega {\sigma ^2}}}} \right\}\exp \left( {\sum\limits_{k = 1}^K { - \lambda _S^k2\pi \frac{{R_S^k}}{{{R_E}}}\int_{r_m^k}^{r_{\max }^k} {\left( {1 - {{\left( {j\omega \beta {p_k}{G_{s,k}}r_I^{ - \alpha } + 1} \right)}^{ - \chi }}} \right){r_I}d{r_I}} } } \right) \times   \\
&\exp \left( {\sum\limits_{k = 1}^K { - \lambda _S^k2\pi \frac{{R_S^k}}{{{R_E}}}\int_{{h_k}}^{r_m^k} {\left( {1 - \frac{{{{\left( {j\omega {p_k}{G_{m,k}}r_I^{ - \alpha } + {\beta ^{ - 1}}} \right)}^{ - \chi }}}}{{{\beta ^\chi }\Gamma \left( \chi  \right)}}\gamma \left( {\chi ,\frac{{j\omega y}}{{\delta _{th}^i}} + \frac{y}{{\beta \delta _{th}^i{p_k}{G_{m,k}}r_I^{ - \alpha }}}} \right)} \right){r_I}d{r_I}} } } \right),
\end{aligned}
\end{equation}
\begin{equation}
\label{27}
    \begin{aligned}
        {\Upsilon ^M}\left( {\omega ,y} \right) = \sum\limits_{k = 1}^K {\frac{{\lambda _S^k2\pi  R_S^k }}{{{R_E}\Gamma \left( \chi  \right)\alpha }}{e^{ - \left( {\frac{{j\omega y}}{{\delta _{th}^i}}} \right)}}{y^{ - \frac{2}{\alpha } - 1}}{{\left( {\beta \delta _{th}^i{p_k}{G_{m,k}}} \right)}^{\frac{2}{\alpha }}}\left( {\gamma \left( {\frac{{\alpha \chi  + 2}}{\alpha },\frac{{y{{\left( {r_{\mathop{\rm m}\nolimits} ^k} \right)}^\alpha }}}{{\delta _{th}^i\beta {p_k}{G_{m,k}}}}} \right) - \gamma \left( {\frac{{\alpha \chi  + 2}}{\alpha },\frac{{y{{\left( {{h_k}} \right)}^\alpha }}}{{\delta _{th}^i\beta {p_k}{G_{m,k}}}}} \right)} \right)}, 
    \end{aligned}
\end{equation}
\begin{equation}
\label{28}
    \begin{aligned} 
   & {\Theta ^S}\left( {\omega ,y} \right) = E\left\{ {{e^{ - j\omega {\sigma ^2}}}} \right\}\exp \left( {\sum\limits_{k = 1}^K { - \lambda _S^k2\pi \frac{{R_S^k}}{{{R_E}}}\int_{{h_k}}^{r_m^k} {\left( {1 - {{\left( {j\omega \beta {p_k}{G_{m,k}}r_I^{ - \alpha } + 1} \right)}^{ - \chi }}} \right){r_I}d{r_I}} } } \right) \times \\
   & \exp \left( {\sum\limits_{k = 1}^K { - \lambda _S^k2\pi \frac{{R_S^k}}{{{R_E}}}\int_{r_m^k}^{r_{\max }^k} {\left( {1 - \frac{{{{\left( {j\omega {p_k}{G_{s,k}}r_I^{ - \alpha } + {\beta ^{ - 1}}} \right)}^{ - \chi }}}}{{{\beta ^\chi }\Gamma \left( \chi  \right)}}\gamma \left( {\chi ,\frac{{j\omega y}}{{\delta _{th}^i}} + \frac{y}{{\beta \delta _{th}^i{p_k}{G_{s,k}}r_I^{ - \alpha }}}} \right)} \right){r_I}d{r_I}} } } \right),
    \end{aligned}
\end{equation}
\begin{equation}
\label{29}
    \begin{aligned}
        {\Upsilon ^S}\left( {\omega ,y} \right) = \sum\limits_{k = 1}^K { \frac{{\lambda _S^k2\pi R_S^k}}{{{R_E}\Gamma \left( \chi  \right) \alpha }}{e^{ - \left( {\frac{{j\omega y}}{{\delta _{th}^i}}} \right)}}{y^{ - 1 - \frac{2}{\alpha }}}{{\left( {\beta \delta _{th}^i{p_k}{G_{s,k}}} \right)}^{\frac{2}{\alpha }}}\left( {\gamma \left( {\frac{{\alpha \chi  + 2}}{\alpha },\frac{{u{{\left( {r_{\max }^k} \right)}^\alpha }}}{{\delta _{th}^i\beta {p_k}{G_{s,k}}}}} \right) - \gamma \left( {\frac{{\alpha \chi  + 2}}{\alpha },\frac{{u{{\left( {r_m^k} \right)}^\alpha }}}{{\delta _{th}^i\beta {p_k}{G_{s,k}}}}} \right)} \right)}. 
    \end{aligned}
\end{equation}
\hrulefill
\end{figure*}
\end{theorem}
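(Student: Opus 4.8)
The plan is to express the coverage event under the max-SINR policy as a statement about the aggregate received signal-plus-interference and then decompose it according to whether the serving satellite lies in the main-lobe or side-lobe reception region. Because $\gamma_{th}^i/(1+\gamma_{th}^i)=\delta_{th}^i$, the event $\max_{i,X}\mathrm{SINR}(X)>\gamma_{th}^i$ is equivalent to the event that the strongest received signal power exceeds $\delta_{th}^i$ times the total received power $Y$ (signal plus interference plus noise); this is the standard ``max-SINR equals fraction of total power'' trick that turns a maximum over a point process into a threshold condition on a sum. First I would write $\mathbb{P}_C^M=\mathbb{P}_m^M\,\mathbb{P}_C^M\{\Phi(\mathcal{A}_m)>0\}+\mathbb{P}_s^M\,\mathbb{P}_C^M\{\Phi(\mathcal{A}_m)=0\}$ using Proposition 3, and handle each conditional term separately, since the antenna gain experienced by the candidate serving satellite differs ($G_{m,k}$ versus $G_{s,k}$) in the two cases.

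Next, conditioned (say) on $\Phi(\mathcal{A}_m)>0$, I would introduce the total received power $Y=\sum_{k,X}p_kH_XG_{\cdot,k}R_X^{-\alpha}+\sigma^2$ and express the coverage probability as an expectation over $Y$ of the probability that at least one satellite contributes more than $\delta_{th}^i Y$. Using the Gil-Pelaez / characteristic-function inversion identity, the indicator of the event $\{\exists X: \text{signal}(X)>\delta_{th}^i Y\}$ is rewritten via the inverse-Fourier representation $\int_0^\infty\int_{-\infty}^{+\infty}\frac{e^{j\omega y}-e^{j\omega y/\max(\delta_{th}^i)}}{2\pi j\omega}\,(\cdot)\,d\omega\,dy$, which is exactly the outer structure appearing in \eqref{25}. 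The factor $\Theta^M(\omega,y)$ is then the characteristic function of the ``residual'' received power (the contribution of all satellites whose individual signal does \emph{not} exceed the threshold, plus noise), computed via the probability generating functional of the PPP over the main-lobe region $[h_k,r_m^k)$ and the side-lobe region $[r_m^k,r_{\max}^k)$ on each tier's sphere, after mapping the spherical density to the contact-distance density with the Jacobian $2\lambda_S^k\pi\frac{R_S^k}{R_E}r_I$; the truncated lower incomplete gamma $\gamma(\chi,\cdot)$ arises from integrating the gamma-distributed channel gain $H$ (PDF \eqref{4}) over the region where its contribution stays below $\delta_{th}^i y$. The factor $\Upsilon^M(\omega,y)$ collects the complementary ``campbell-type'' term for the one satellite that does exceed the threshold — i.e. the intensity measure of contributing satellites weighted by the exceedance probability — with the incomplete-gamma differences $\gamma(\frac{\alpha\chi+2}{\alpha},\cdot)$ coming from the change of variables $r_I\mapsto$ signal power and the finite integration limits $[h_k,r_m^k]$. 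Repeating the argument on the event $\Phi(\mathcal{A}_m)=0$, where the candidate serving satellite necessarily has side-lobe gain $G_{s,k}$ and lives in $[r_m^k,r_{\max}^k]$, yields $\Theta^S,\Upsilon^S$ in \eqref{28}-\eqref{29}, completing the sum.

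The main obstacle will be justifying the inversion/decomposition rigorously: specifically, arguing that the probability of \emph{two or more} satellites simultaneously exceeding $\delta_{th}^i Y$ is zero (so the ``at least one'' event can be replaced by a sum over individual satellites without inclusion-exclusion corrections) — this holds because $\delta_{th}^i>1/2$ is impossible for two disjoint contributions to a common total, so at most one satellite can carry more than half the total power, and one must ensure the SINR thresholds satisfy $\gamma_{th}^i\ge 1$, equivalently $\delta_{th}^i\ge 1/2$, or otherwise track the overlap terms. The second delicate point is interchanging the PPP expectation with the Fourier integral and applying the PGFL tier-by-tier while correctly splitting the per-tier intensity into the ``below threshold'' piece (which feeds $\Theta$) and the ``above threshold'' piece (which feeds $\Upsilon$); this requires care because the split depends on both $\omega$ and $y$ through the incomplete-gamma arguments. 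Once these measure-theoretic steps are in place, the remaining work is the routine but lengthy evaluation of the spherical-cap intensity integrals and the channel-gain integrals against the gamma density, producing the closed forms \eqref{26}-\eqref{29}; I would relegate those computations to the appendix referenced after the theorem statement.
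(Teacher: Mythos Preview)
Your proposal follows the same overall architecture as the paper: condition on whether $\Phi(\mathcal A_m)$ is empty, rewrite the max-SINR event as a comparison between the strongest received power and the aggregate $I^M+\sigma^2$, evaluate a truncated characteristic function via the PGFL tier by tier over the spherical caps (producing the incomplete-gamma terms from the gamma-distributed $H$), and recover the probability through a Fourier/Bromwich inversion. In the paper, $\Theta^M(\omega,y)$ is exactly $\mathcal L^y_{I^M+\sigma^2}(j\omega)\triangleq\mathbb E\bigl[e^{-j\omega(I^M+\sigma^2)}\mathbf 1(\max_i\delta_{th}^iM_i^m\le y)\bigr]$ and $\Upsilon^M(\omega,y)$ is its logarithmic derivative in $y$, so that $\Theta^M\Upsilon^M=\partial_y\mathcal L^y=\widehat f(j\omega,y)$ is the partial Laplace transform of the joint density of $\bigl(I^M+\sigma^2,\max_i\delta_{th}^iM_i^m\bigr)$; your Campbell-type reading of $\Upsilon$ is equivalent to this derivative.

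Two divergences are worth flagging. First, the paper's convention is $\delta_{th}^i=(1+\gamma_{th}^i)/\gamma_{th}^i>1$, the reciprocal of yours, and the event is written $\max_i(\delta_{th}^iM_i^m)>I^M+\sigma^2$; this is cosmetic but explains the kernel $e^{j\omega y}-e^{j\omega y/\max(\delta_{th}^i)}$. Second, and more substantively, the paper does \emph{not} invoke the ``at most one satellite can exceed the threshold'' argument you identify as the main obstacle. Rather than Campbell-summing over candidate servers (which, as you note, would need $\gamma_{th}^i\ge1$ to avoid inclusion--exclusion), the paper treats $Y=\max_i(\delta_{th}^iM_i^m)$ as a single random variable, builds its joint law with $X=I^M+\sigma^2$ through the truncated transform $\mathcal L^u$, differentiates in $u$ to extract the density in $y$, and then inverts the indicator $\mathbf 1\bigl(y>x>y/\max_i\delta_{th}^i\bigr)$ as a Bromwich integral; the lower limit $y/\max_i\delta_{th}^i$ is automatic from $Y\le\max_i(\delta_{th}^i)\,X$ and carries no extra hypothesis. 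Following this joint-density route lets you drop the $\gamma_{th}^i\ge1$ caveat and the uniqueness justification altogether.
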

\begin{proof}
  Please refer to Appendix G.  
\end{proof}

\subsection{Non-handover Probability}
\begin{theorem}
The NHP of the multi-tier satellite communication system with max-SINR satellite UAP is given by
\begin{equation}
\label{30}
\begin{aligned}
&\mathbb{P}_{NH}^M = \sum\limits_{i = 1}^K {\left( {\mathbb{P}\left\{ {{\rm{SINR}}_i^E \ge \gamma _{th}^i,t_E^i \ge t_{th}^i,\Phi \left( {{\cal A}_m^i} \right) > 0|S = i} \right\}} \right.} \\
& + \left. {\mathbb{P}\left\{ {{\rm{SINR}}_i^E \ge \gamma _{th}^i,t_E^i \ge t_{th}^i,\Phi \left( {{\cal A}_m^i} \right) = 0|S = i} \right\}} \right) \mathbb{P}_{ass,i}^M,
\end{aligned}
\end{equation}
where $\mathbb{P}_{ass,i}^{M}$ is given in Proposition 3, ${\mathbb{P}\left\{ {\mathrm{SINR} _i^E \ge \gamma _{th}^i,t_E^i \ge t_{th}^i,\Phi \left( {{\cal A}_m^i} \right) > 0|S = i} \right\}}$, and ${\mathbb{P}\left\{ {\mathrm{SINR} _i^E \ge \gamma _{th}^i,t_E^i \ge t_{th}^i,\Phi \left( {{\cal A}_m^i} \right) = 0|S = i} \right\}}$ are given in Theorem 2.
\end{theorem}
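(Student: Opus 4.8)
The plan is to obtain the stated NHP expression by reusing almost verbatim the analysis already carried out for the nearest-satellite UAP, so that the only object that has to be recomputed is the per-tier association weight. First I would rewrite the non-handover event under the max-SINR UAP as the joint occurrence that the serving link---the one realizing $\max_{i}\max_{X\in\Phi_{vis,i}^{S}}\mathrm{SINR}(X)$---exceeds the threshold $\gamma_{th}^{i}$ and that the hosting satellite spends more than $t_{th}^{i}$ inside its dome region, and then condition on the tier $S=i$ that contains the serving satellite. The law of total probability gives $\mathbb{P}_{NH}^{M}=\sum_{i=1}^{K}\mathbb{P}\{\mathrm{SINR}\ge\gamma_{th}^{i},\,t_{E}^{i}\ge t_{th}^{i}\mid S=i\}\,\mathbb{P}_{ass,i}^{M}$, with $\mathbb{P}_{ass,i}^{M}$ furnished exactly by Proposition~3. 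Splitting the inner conditional probability once more according to whether the serving satellite lies in the main-lobe cap $\mathcal{A}_{m}^{i}$ or the side-lobe ring $\mathcal{A}_{s}^{i}$ (that is, whether $\Phi(\mathcal{A}_{m}^{i})>0$) produces exactly the two summands appearing in \eqref{30}.

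The core of the proof is to identify each of these inner probabilities with the corresponding expression from Theorem~2, i.e.\ \eqref{19}--\eqref{20}. The argument rests on three observations made after conditioning on the serving satellite being in tier $i$ at distance $r$. First, the aggregate interference is the contribution of all the remaining satellites of $\bigcup_{k}\Phi_{vis,k}^{S}$; by Slivnyak's theorem the deletion of the serving point leaves a law identical to that used in Lemma~1, so its Laplace transform is still $\mathcal{L}_{I_{ml}^{i}}$ (serving satellite in $\mathcal{A}_{m}^{i}$) or $\mathcal{L}_{I_{sl}^{i}}$ (serving satellite in $\mathcal{A}_{s}^{i}$). Second, the handover-time condition $t_{E}^{i}\ge t_{th}^{i}$ and the lobe classification are purely geometric constraints on the serving satellite's position and the uniformly distributed longitude angle $\theta$; hence they translate into the very same integration windows $R_{m,dw}^{1},R_{m,up}^{1},R_{s,dw}^{1},R_{s,up}^{1}$ and the same auxiliary boundaries $\vartheta^{\pm},\varsigma^{\pm},\Psi^{\pm}$ used in Theorem~2, independently of the association rule. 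Third, in the regime $\gamma_{th}^{i}\ge 1$ at most one satellite of the whole network can exceed its SINR threshold, so the covering satellite---when it exists---is the strongest, and hence, conditioned on its tier, the nearest one in that tier; therefore the coverage event is described through the same $\Omega(\cdot,\cdot)$ function and the same conditional distance density $f_{r}^{E}(r\mid S=i)$ as in Theorem~2. Combining these three facts shows the inner probabilities equal \eqref{19} and \eqref{20}, and inserting $\mathbb{P}_{ass,i}^{M}$ for the tier weight yields \eqref{30}.

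I expect the third observation to be the real obstacle, because the max-SINR mechanism is structurally different from nearest-satellite association: once Shadowed-Rician fading is taken into account the max-SINR serving satellite need not be the nearest in its tier, so the identification of the conditional serving-link statistics with those of Theorem~2 is exact only under the dominant-interferer / high-threshold approximation that singles out one covering satellite per tier. I would therefore state this approximation explicitly as the standing assumption of the NHP analysis, check that the main-/side-lobe partition $\{\Phi(\mathcal{A}_{m}^{i})>0\}$ is interpreted coherently for both policies (as ``the serving satellite lies in $\mathcal{A}_{m}^{i}$''), and rely on the Starlink-calibrated Monte Carlo experiments to confirm that the resulting closed form of \eqref{30} tracks the simulated NHP closely.
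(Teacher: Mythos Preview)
Your proposal is correct and mirrors the paper's approach exactly: the paper provides no separate proof for this theorem, but simply asserts the decomposition \eqref{30} by reusing the tier-conditional expressions \eqref{19}--\eqref{20} from Theorem~2 and replacing $\mathbb{P}_{ass,i}^{E}$ with $\mathbb{P}_{ass,i}^{M}$ from Proposition~3. Your justification is in fact more careful than the paper's---the paper does not invoke Slivnyak, does not state the $\gamma_{th}^{i}\ge 1$ condition, and only remarks informally (in the simulation discussion of Fig.~\ref{Fig 8}) that the conditional NHP given the serving tier is policy-invariant because it depends solely on the nearest satellite's traversal time; your explicit acknowledgment that identifying the max-SINR serving satellite with the nearest in its tier is an approximation is a point the paper leaves implicit.
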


\begin{theorem}
    The DOP of the multi-tier satellite communication system with max-SINR satellite UAP is given by
    \begin{equation}
    \label{31}
        \begin{aligned}
&\mathbb{P}_{DO}^M = \sum\limits_{i = 1}^K  \mathbb{P}\left\{ {{R_i} \le cT_D^i|S = i} \right\}\mathbb{P}_{ass,i}^M\\
& = \sum\limits_{i = 1}^K {\left( {1 - \exp \left( { - \lambda _S^i\pi \frac{{R_S^i}}{{{R_E}}}\left( {{{\left( {cT_D^i} \right)}^2} - h_i^2} \right)} \right)} \right)P_{ass,i}^M},
        \end{aligned}
    \end{equation}
   $\mathbb{P}_{ass,i}^M$ is given in Proposition 3. 
\end{theorem}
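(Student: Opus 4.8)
The plan is to mirror the derivation of Theorem~3 (the nearest-UAP delay-outage result), substituting the max-SINR association probability $\mathbb{P}_{ass,i}^M$ from Proposition~3 for its nearest-satellite counterpart. I would begin by conditioning on the serving tier: under the max-SINR UAP the typical user is served by exactly one tier almost surely, so $\{S=i\}_{i=1}^K$ partitions the sample space and the law of total probability gives $\mathbb{P}_{DO}^M=\sum_{i=1}^K\mathbb{P}\{\tau_i\le T_D^i\mid S=i\}\,\mathbb{P}_{ass,i}^M$, where $\tau_i$ denotes the one-way propagation delay on the serving link in tier $i$. Since $\tau_i=R_i/c$ with $c$ the speed of light, the event $\{\tau_i\le T_D^i\}$ is $\{R_i\le cT_D^i\}$, which reproduces the first line of (\ref{31}).

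Next I would evaluate $\mathbb{P}\{R_i\le cT_D^i\mid S=i\}$. The delay event involves neither the fading nor the interference, only the geometric distance to the relevant tier-$i$ satellite, so the computation reduces to the contact-distance law of the HSPPP $\Phi_S^i$. Using the same spherical-geometry argument employed for Propositions~1 and~2 --- the law of cosines to relate the Euclidean distance $r$ to the contact angle, followed by the spherical-cap area formula --- the set $\mathcal{A}_r^i$ of tier-$i$ orbital points within distance $r$ of the typical user has area $\pi\frac{R_S^i}{R_E}\bigl(r^2-h_i^2\bigr)$ for $h_i\le r\le r_{\max}^i$. The void probability of $\Phi_S^i$ on this cap is therefore $\exp\!\bigl(-\lambda_S^i\pi\frac{R_S^i}{R_E}(r^2-h_i^2)\bigr)$, whence the contact-distance CDF is $1-\exp\!\bigl(-\lambda_S^i\pi\frac{R_S^i}{R_E}(r^2-h_i^2)\bigr)$. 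Evaluating at $r=cT_D^i$ and inserting $\mathbb{P}_{ass,i}^M$ from Proposition~3 yields the closed form in the second line of (\ref{31}).

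The crux of the argument is this second step: justifying that, conditioned on $\{S=i\}$, the distance governing the delay event obeys the \emph{unconditioned} tier-$i$ contact-distance law above. For the nearest-satellite UAP this is the natural choice because the serving satellite is the nearest tier-$i$ satellite whenever $S=i$; under the max-SINR UAP the serving satellite need not be the nearest one, so the same expression should be read as the delay-outage probability evaluated with respect to the nearest tier-$i$ satellite --- which, since a closer satellite only reduces the propagation delay, also provides a convenient benchmark for the true serving-link delay. I would also record the boundary convention: the closed form applies on the regime $h_i\le cT_D^i\le r_{\max}^i$ (outside it the probability degenerates to $0$, or to the visibility probability once $cT_D^i$ exceeds $r_{\max}^i$), which is the standing assumption under which the single expression is valid. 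With these points settled, assembling the tier sum and substituting $\mathbb{P}_{ass,i}^M$ is routine, and the statement follows as the max-SINR analogue of Theorem~3.
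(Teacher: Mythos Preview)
Your proposal is correct and matches the paper's approach: the paper states Theorem~6 without proof as the direct max-SINR analogue of Theorem~3, obtained by replacing $\mathbb{P}_{ass,i}^E$ with $\mathbb{P}_{ass,i}^M$ from Proposition~3 while keeping the same nearest-distance CDF for each tier. Your observation that the second line uses the \emph{unconditioned} tier-$i$ contact-distance law (rather than the true conditional law of the serving-link distance given $S=i$ under max-SINR) is accurate and reflects the convention the paper adopts throughout---the same simplification appears in Theorem~5 for the NHP, where the nearest-UAP conditional probabilities are reweighted by $\mathbb{P}_{ass,i}^M$.
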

\section{Simulations and Discussions}
In this section, Monte Carlo simulations are conducted to evaluate the CP and NHP of a typical user in the $K$-tier downlink LEO satellite networks.
Default parameters are listed in Table \ref{Table2}, unless otherwise specified. Fig. 4 illustrates a 3D diagram for multi-tier HetSatNet with the PPP constellation and Starlink constellation, respectively. In order to ensure the tractability of modeling and the feasibility of simulation, a simplified version of the Starlink constellation was used. The LEO satellites are distributed at different orbital altitudes, based on a homogeneous SPPP model.
\begin{table}[t]
\setlength{\abovecaptionskip}{0cm}
\captionsetup{font={normalsize}}
\caption{Default Parameters Setup}
\label{Table2}
\begin{center}
\small
 \begin{tabular}{c c}
 \hline
 \hline
 \textbf{Parameter} &  \textbf{Value} \\ 
 \hline
 \hline
 $h_1$,$h_2$,$h_3$ &  600 km, 900 km, 1200 km \cite{9678973}\\
 
 $N_S^1$,$N_S^2$,$N_S^3$ &   $100$ \cite{wang2023coverage}\\

 $P_1$,$P_2$,$P_3$ & 10 dBW \cite{okati2020stochastic}, 15 dBW \cite{wang2023coverage}, 30 dBW \cite{song2022cooperative}\\
 
$G_{ml,1}$,$G_{ml,2}$,$G_{ml,3}$ &  47 dBi \cite{guo2024user}\\

 $G_{sl,1}$,$G_{sl,2}$,$G_{sl,3}$ & 20 dBi \cite{guo2024user}\\
 
 $c_1$,$c_2$,$c_3$ & $7.5997e^3$ m/s,$7.5887e^3$ m/s,$7.5723e^3$ m/s \\
 
 $\sigma^2 $ & $1e^{-12}$\\
 
$\varphi_{D}^1$,$\varphi_{D}^2$,$\varphi_{D}^3$  & $0.1$ rads, $0.1$ rads, $0.1$ rads \\
 
 $\alpha$, $m$, $b_0$, $\Omega$ & $2$, $2$, $1$, $1$ \\
  $T_D^1$, $T_D^2$, $T_D^3$ & 10 ms, 10 ms, 10 ms\\

  $\varphi_m^i$ & 0.01 rads \\
 \hline
 \hline
\end{tabular}
\vspace{-0.4em} 
\end{center}
\end{table}

\subsection{Coverage Probability Simulations}

\captionsetup{font={scriptsize}}
\begin{figure}[t]
\begin{center}
\setlength{\abovecaptionskip}{-0.0cm}
\setlength{\belowcaptionskip}{-0.9cm}
\centering
  \includegraphics[width=3.4in, height=1.9in]{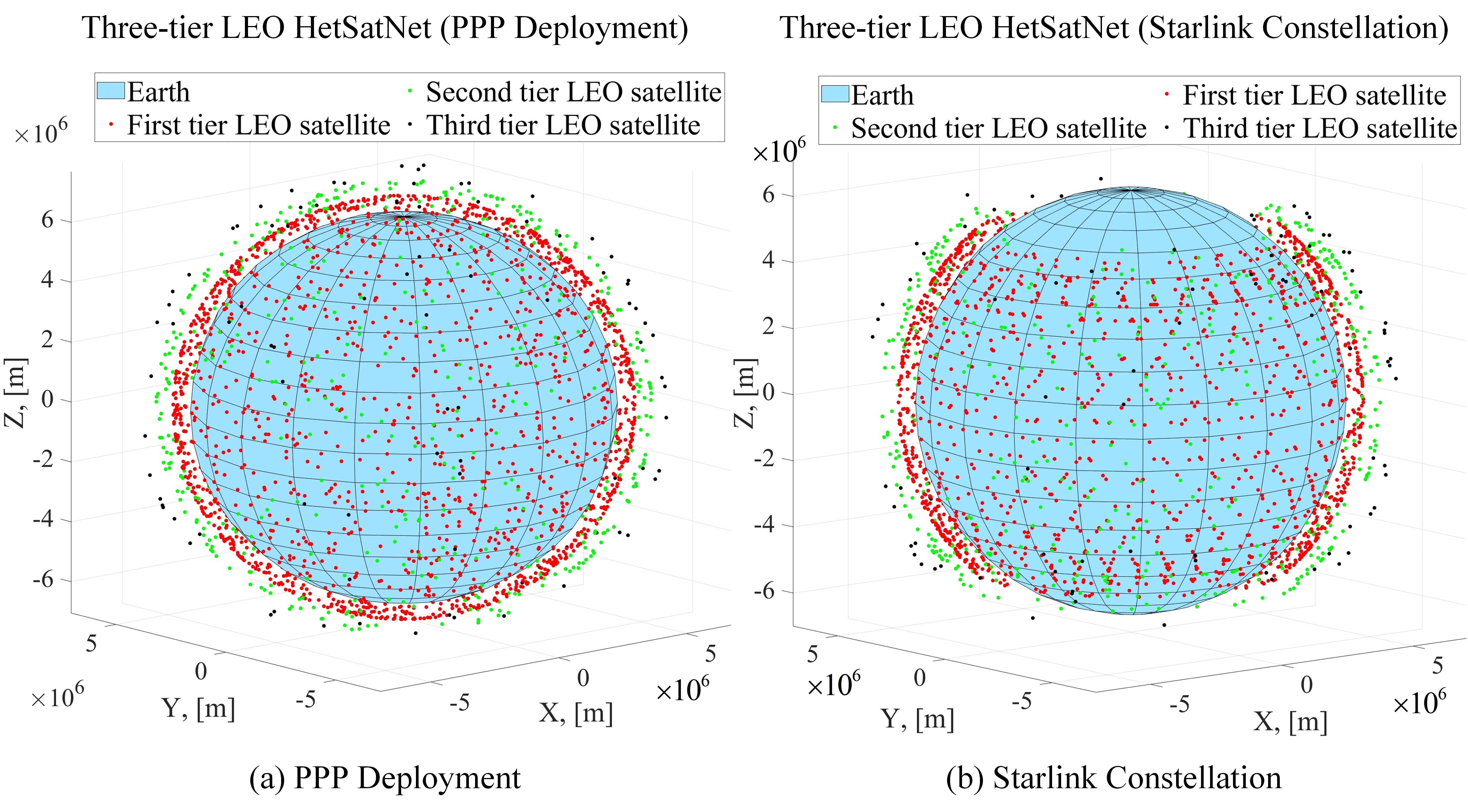}
\renewcommand\figurename{Fig.}
\caption{\scriptsize Three-dimensional graph of a three-tier LEO HetSatNet with PPP constellation and Starlink constellation.}
\label{Fig 4}
\end{center}
\end{figure}
Fig. \ref{Fig 5} delineates the CP of 1-tier, 2-tier, and 3-tier satellite systems under two UAPs, i.e., nearest and max-SINR satellite UAPs. 
Observations from Fig. \ref{Fig 5} indicate that, within the max-SINR UAP framework, the CPs of the multi-tier satellite systems significantly exceed those of the nearest UAP system. Remarkably, when the SINR threshold is configured at $-5$ dB, the highest CP attained by a 1-tier system using the nearest UAP is 0.481, 0.77, and 0.91 for the 1-tier, 2-tier, and 3-tier systems, respectively. These CP values demonstrate substantial performance gains compared to the nearest UAP system. To summarize, the communication system governed by the max-SINR UAP exhibits superior performance relative to its nearest UAP counterpart when the higher-tier satellites have relatively more extensive powers. Furthermore, within the max-SINR UAP framework, the hierarchy of CPs indicates that the 3-tier system surpasses the 2-tier system, which outperforms the 1-tier system under certain parameter configurations. To obtain more design insights in actual situations, we will explore the optimal density of satellites in each tier when the system CP is maximized in a two-tier system. This will allow us to deploy the optimal number of satellites to achieve optimal coverage in the real-world scenarios.
\captionsetup{font={scriptsize}}
\begin{figure}[t]
\begin{center}
\setlength{\abovecaptionskip}{-0.0cm}
\setlength{\belowcaptionskip}{-0.5cm}
\centering
  \includegraphics[width=3.4in, height=2.8in]{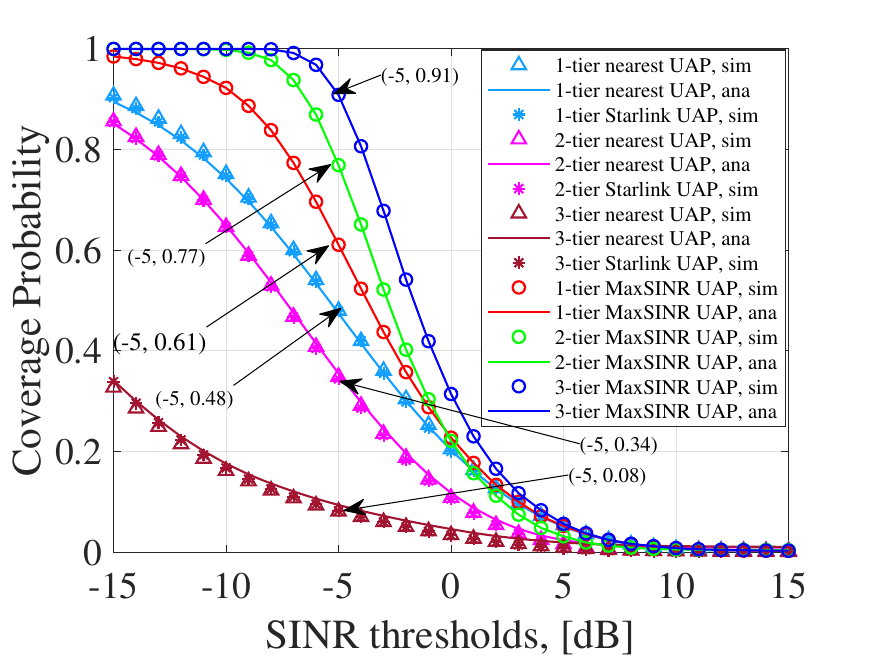}
\renewcommand\figurename{Fig.}
\caption{\scriptsize CP varies across various SINR thresholds, with two UAPs for 1-tier network, 2-tier network and 3-tier network. (sim: simulation, ana: analysis)}
\label{Fig 5}
\end{center}
\end{figure}
\captionsetup{font={scriptsize}}
\begin{figure}[t]
\begin{center}
\setlength{\abovecaptionskip}{-0.0cm}
\setlength{\belowcaptionskip}{-0.5cm}
\centering
  \includegraphics[width=3.4in, height=2.8in]{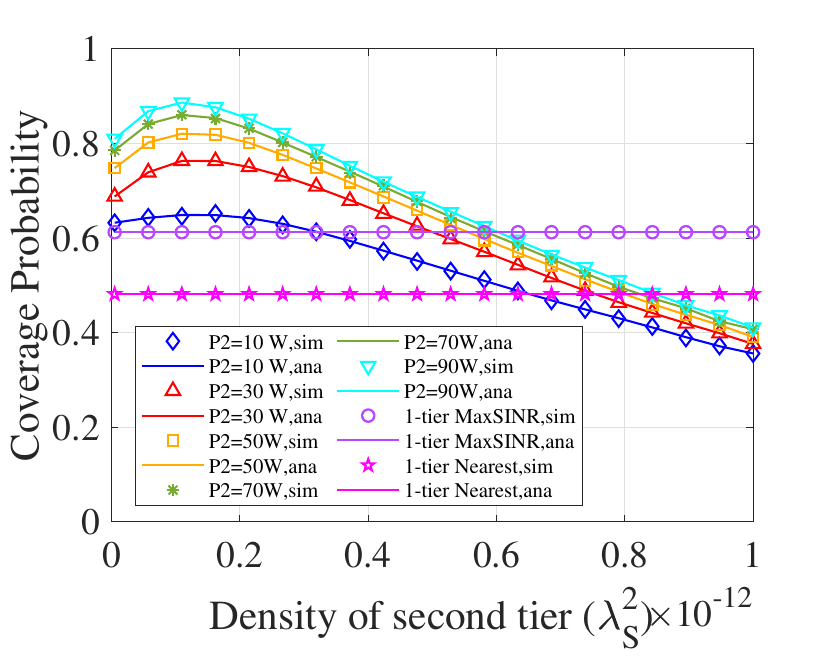}
\renewcommand\figurename{Fig.}
\caption{\scriptsize CPs with max-SINR UAP and nearest UAP across a spectrum of satellite density of second tier in a 2-tier system, evaluated under a variety of the second-tier satellite transmit power (-5 dB).}
\label{Fig 6}
\end{center}
\end{figure}

\captionsetup{font={scriptsize}}
\begin{figure}[t]
\begin{center}
\setlength{\abovecaptionskip}{-0.0cm}
\setlength{\belowcaptionskip}{-0.5cm}
\centering
  \includegraphics[width=3.2in, height=2.8in]{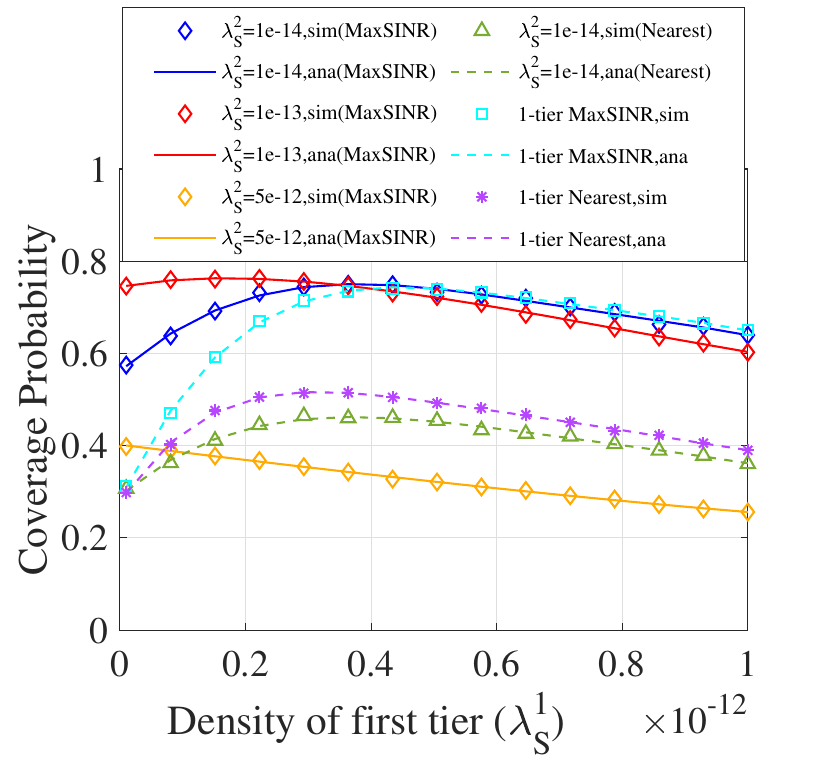}
\renewcommand\figurename{Fig.}
\caption{\scriptsize CPs with max-SINR UAP and nearest UAP across a spectrum of satellite density of first tier in 2-tier system, evaluated under a variety of second-tier satellite satellite density (-5 dB).}
\label{Fig 7}
\end{center}
\end{figure}

Fig. \ref{Fig 6} illustrates the CP dynamics in a 2-tier max-SINR UAP scheme with a SINR threshold of $-5$ dB. The graphic reveals the intricate relationship between the density of satellites in the second tier and satellite transmission power. Specifically, the graphical representation delineates the CP dynamics of the 2-tier max-SINR UAP system, considering distinct levels of second-tier satellite transmission power, i.e., $10 W$, $30 W$, $50 W$, $70 W$, and $90 W$. Noteworthy patterns emerge, revealing optimal satellite density configurations for each transmission power setting, leading to the attainment of maximal CP. For instance, when the second-tier satellite transmit power is set to $30 W$, the system achieves a maximum CP of 0.76 at an optimal second-tier satellite density of $1.62\times 10^{-13}$, with the corresponding first-tier satellite density being $1.64 \times 10^{-13}$.
This optimal performance can be attributed to the relatively sparse deployment of first-tier satellites, users predominantly connect to second-tier satellites. As the second-tier satellite density increases, the likelihood of connecting to a closer second-tier satellite improves, leading to stronger received signal power and thus an increase in CP. However, further increases in second-tier satellite density introduce excessive co-channel interference, which outweighs the benefits of signal enhancement and results in a gradual decline in CP.

Concurrently, a comparative analysis is conducted under consistent parameter settings to distinguish the performance differentials among the 2-tier max-SINR UAP system, the 1-tier max-SINR UAP system, and the 1-tier nearest UAP system. The graphical depiction accentuates that the performance of the 2-tier max-SINR UAP system with $P_{2}= 10 W$  is comparatively diminished in relation to the 1-tier max-SINR UAP system with $P_{1}= 10 W$ when the second-tier density exceeds $0.4 \times 10^{-12}$. 

Fig.~\ref{Fig 7} presents a detailed evaluation of the CP in a two-tier LEO satellite network. The analysis investigates the joint influence of first-tier and second-tier satellite densities on CP, assuming fixed transmit powers of 10 dBW and 15 dBW for the first and second tiers, respectively. The density of the first-tier satellites varies from $1\times 10^{-14} /m^2$ to $1\times 10^{-12} /m^2$.
Under the 1-tier Max-SINR UAP configuration, the CP initially increases with the first-tier satellite density, then decreases. This non-monotonic behavior arises from a trade-off between improved signal strength and accumulated interference. When the density increases, the typical user is more likely to connect to a closer satellite with stronger signal power. However, excessive density introduces significant co-channel interference, eventually degrading the SINR and thus the CP.

For the two-tier system, when the second-tier satellite density is moderate, a similar non-monotonic trend is observed. Initially, due to the sparsity of first-tier satellites, users tend to associate with second-tier satellites, which, despite higher path loss, offer stronger signals owing to their higher transmit power. As the first-tier density increases, the received signal power from first-tier satellites improves, making them more competitive in SINR. Consequently, more users begin associating with the first tier, while the availability of second-tier satellites provides fallback opportunities in case of weak or obstructed first-tier links. This dual-access capability enhances the overall CP. However, beyond a certain density threshold, interference from both tiers dominates, and the CP begins to deteriorate, as shown by red and blue line in Fig. \ref{Fig 7}. When the second-tier satellite density is high, the network becomes interference-limited even at the early stages. The increased number of interfering second-tier satellites lowers the CP, and further increases in the first-tier density exacerbate interference, leading to a continuous decline in CP, as reflected by the orange curve in Fig. \ref{Fig 7}.

Comparing the two-tier and one-tier systems in Fig. \ref{Fig 7}, we observe that in the low-density regime, the two-tier system outperforms the single-tier system due to a broader selection of potential serving satellites. As the first-tier density increases, users benefit from shorter communication distances, and the Max-SINR policy enables optimal tier selection. Additionally, the presence of second-tier satellites provides an auxiliary layer of connectivity, thereby further improving CP. Eventually, as the first-tier density becomes excessive, the single-tier system, which is subject to interference from only one layer, slightly outperforms the two-tier system, which suffers from cumulative interference from both tiers. This is evident from the intersection of the red and light-blue curves.

Under the Nearest satellite UAP, a similar trend is noted, though with key distinctions. Since users connect to the nearest satellite regardless of signal quality, initial CP performance is limited by the sparsity of the first tier. As the density increases, CP improves due to enhanced signal strength. However, the absence of signal-based association causes the second-tier satellites to act primarily as interferers. As a result, the two-tier Nearest UAP system consistently underperforms its single-tier counterpart.

In summary, the Max-SINR UAP significantly improves CP relative to the Nearest UAP. Moreover, under appropriate parameter settings, the proposed multi-tier satellite architecture yields notable performance gains over conventional single-tier deployments. These results offer valuable design guidelines for practical multi-tier HetSatNet systems, particularly in identifying optimal first-tier and second-tier satellite density configurations to maximize CP.

From a complexity perspective, the Nearest UAP is simpler to implement, as users associate with the closest satellite in physical distance. In contrast, the Max-SINR UAP requires users to compute and compare SINR values across all visible satellites before selecting the optimal link, resulting in higher computational overhead. Therefore, a trade-off exists between complexity and performance:, i.e., Max-SINR offers higher CP at the cost of increased processing, while Nearest UAP provides implementation simplicity with slightly reduced performance. Striking an appropriate balance between these factors is critical for system-level optimization.

\subsection{Non-handover Probability Simulations}
\captionsetup{font={scriptsize}}
\begin{figure}[t]
\begin{center}
\setlength{\abovecaptionskip}{-0.0cm}
\setlength{\belowcaptionskip}{-0.9cm}
\centering
  \includegraphics[width=3.2in, height=2.8in]{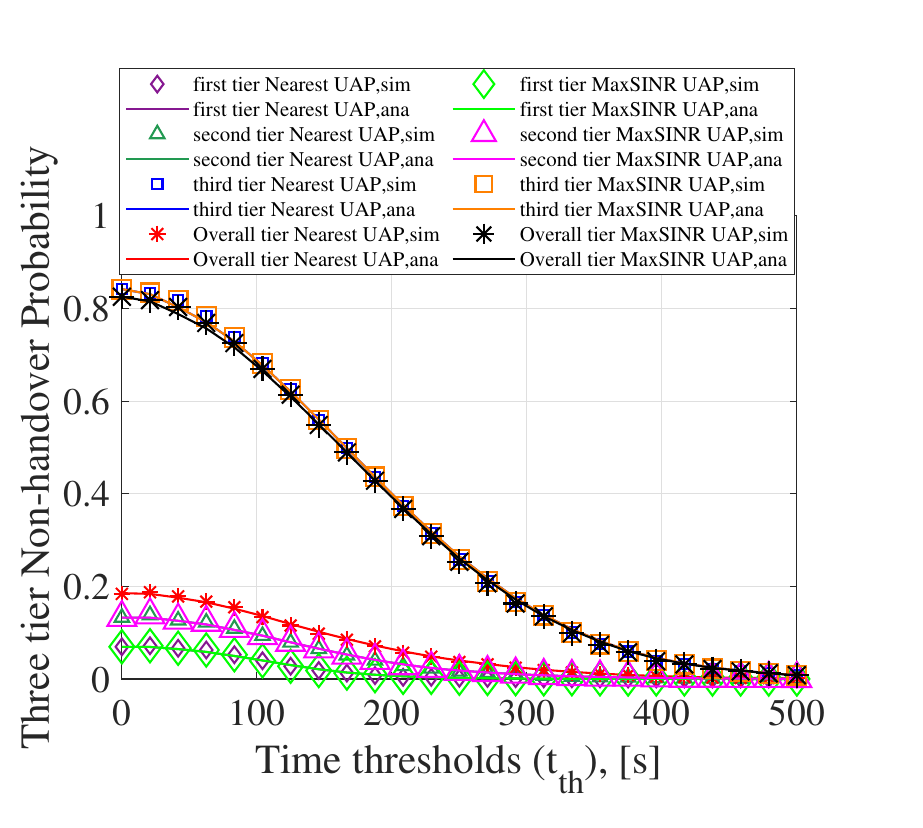}
\renewcommand\figurename{Fig.}
\caption{\scriptsize NHP across various time thresholds, evaluated each tier for a 3-tier network under two UAPs (Nearest and Max-SINR) (-10 dB). }
\label{Fig 8}
\end{center}
\end{figure}
Fig. \ref{Fig 8} conducts a detailed analysis of how the NHP changes over time by applying the nearest and max-SINR UAPs, specifically in the context of a SINR threshold set at -10 dB. It is evident that when accessing each tier, the NHP curves for both the nearest UAP and max-SINR UAP systems overlap as the time thresholds vary.
Exemplified by the alignment of the NHP curve for the first tier under the nearest UAP system with that of the first tier under the max-SINR UAP system, this convergence stems from the inherent property that, assuming access to a specific tier, NHP remains contingent solely upon the temporal requisites for the nearest satellite in that tier to traverse the dome region. This condition renders NHP invariant with respect to the chosen UAP under the assumption of accessing a specific tier. However, owing to distinct association probabilities for each tier, the overall NHP manifests different paths between the two policies.

Furthermore, Fig. \ref{Fig 8} illuminates that within the nearest UAP system, the overall NHP closely aligns with the NHP value of the first tier, thereby affirming the predominant influence of the first-tier satellite's NHP on the overall NHP of the system. In contrast, within the max-SINR UAP system, the NHP of the third-tier satellite remains in a paramount position. Additionally, Fig. \ref{Fig 8} shows that the max-SINR UAP system can attain a larger NHP than the nearest UAP. Particularly under prolonged time thresholds, the max-SINR UAP system demonstrates the capability to curtail switching probabilities while concurrently preserving favorable communication quality, thereby distinguishing itself from the nearest UAP system.

\captionsetup{font={scriptsize}}
\begin{figure}[t]
\begin{center}
\setlength{\abovecaptionskip}{-0.0cm}
\setlength{\belowcaptionskip}{-0.0cm}
\centering
 \includegraphics[width=3.2in, height=2.8in]{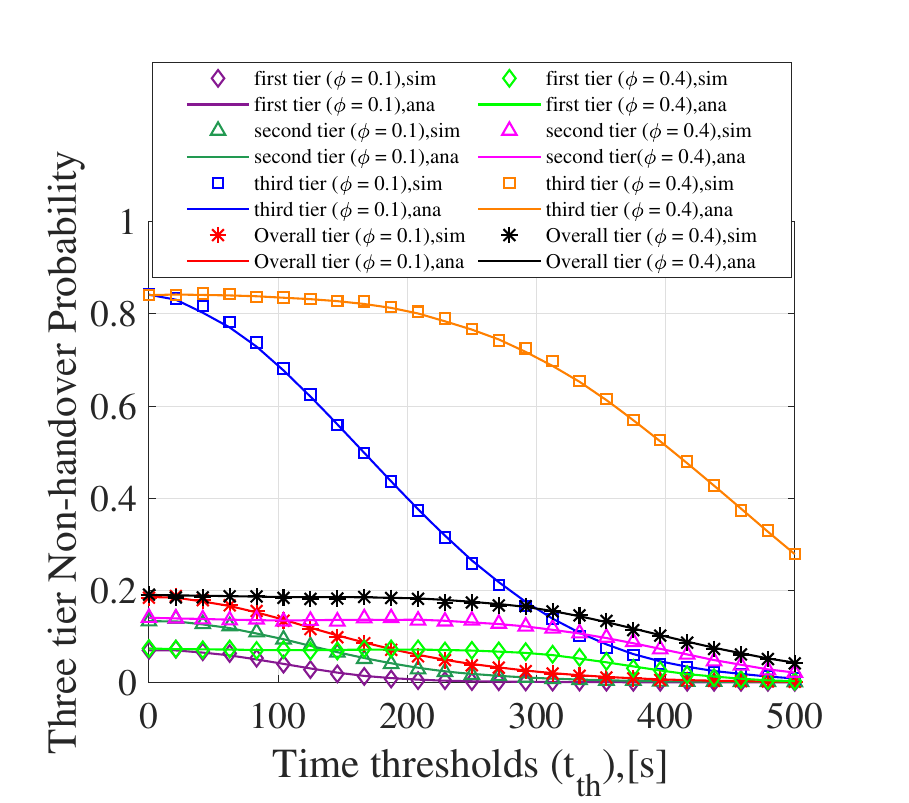}
\renewcommand\figurename{Fig.}
\caption{\scriptsize 3 tiers NHP: Variations across time thresholds and the Earth-based contact angles for dome region with nearest UAP when the SINR threshold is -10 dB.}
\label{Fig 10}
\end{center}
\end{figure}

Fig. \ref{Fig 10} delineates the variation in NHP for each tier and the aggregate system in response to temporal adjustments, considering distinct Earth-based contact angles for the dome region. As the contact angle increases, a noticeable shift occurs in the temporal domain, where NHP undergoes a substantial decrease. This phenomenon can be attributed to the increased temporal requirement for the nearest satellite to exit the dome region. Consequently, the probability that the time required for the nearest satellite to leave the dome region exceeds the time threshold rises, leading to an increase in NHP values.
\captionsetup{font={scriptsize}}
\begin{figure}[t]
\begin{center}
\setlength{\abovecaptionskip}{-0.0cm}
\setlength{\belowcaptionskip}{-0.5cm}
\centering
  \includegraphics[width=3.2in, height=2.8in]{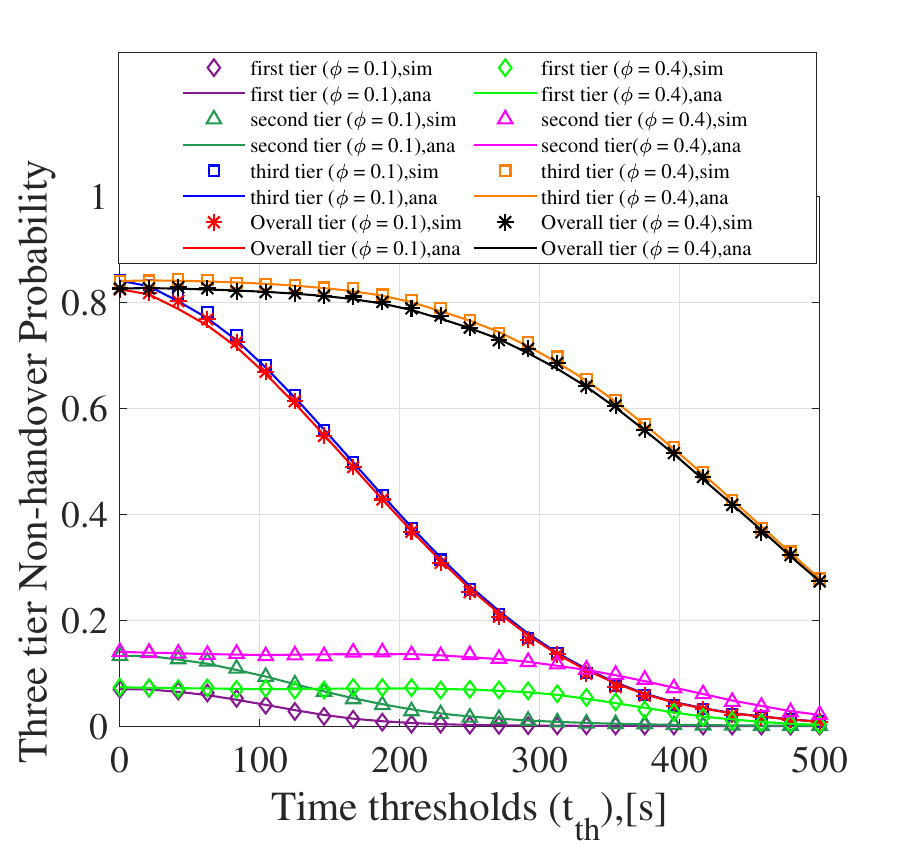}
\renewcommand\figurename{Fig.}
\caption{\scriptsize 3 tiers NHP: Variations across time thresholds and the Earth-based contact angles for dome region with max-SINR UAP when the SINR threshold is -10 dB.}
\label{Fig 11}
\end{center}
\end{figure}

Fig. \ref{Fig 11} illustrates the variation of NHP for each tier and the entire system in a max-SINR UAP system based on the various contact angles of the dome region. Comparing Fig. \ref{Fig 10} and Fig. \ref{Fig 11},  it can be observed that the overall NHP in the max-SINR system changes more gradually with an increasing time threshold. This gradual decline trend in real-world satellite design can help mitigate the issue of signal quality degradation caused by rapid satellite handovers. Consequently, the system can maintain a more stable connection, thereby reducing the frequency and severity of signal interruptions. This approach can enhance the overall reliability and performance of satellite communications networks.
\captionsetup{font={scriptsize}}
\begin{figure}[t]
\begin{center}
\setlength{\abovecaptionskip}{-0.2cm}
\setlength{\belowcaptionskip}{-0.5cm}
\centering
  \includegraphics[width=3.2in, height=2.8in]{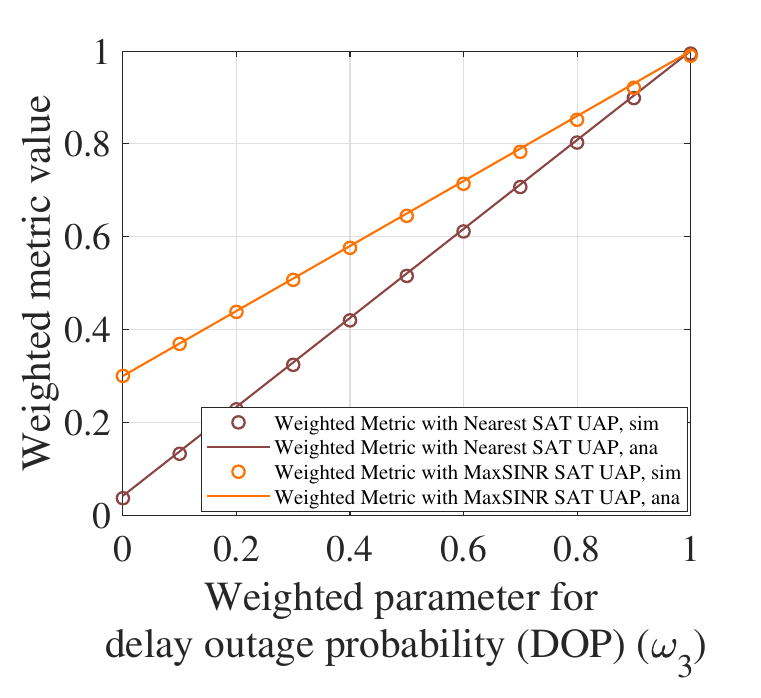}
\renewcommand\figurename{Fig.}
\caption{\scriptsize Weight Metric: Variations across the weighted parameter of time delay outage $\omega_3^{E,M}$ (0 dB).}
\label{Fig 12}
\end{center}
\end{figure}

To further illustrate the practical applicability of the WM framework, we conducted a systematic parameter sweep over the weight space using a meshgrid-based scanning approach. Specifically, $\omega_1^{N,M}$ and $\omega_3^{N,M}$ were varied within [0,1], while ensuring $\omega_2^{N,M} = 1- \omega_1^{N,M} - \omega_3^{N,M}$. A validity mask was applied to satisfy the non-negative constraint for all weights. The results shown in Fig. \ref{Fig 12} provide a performance landscape that reveals the robustness of the Max-SINR UAP across different weight configurations. Particularly, we observed that the Max-SINR UAP consistently outperforms the Nearest UAP across a wide range of weight settings. The primary reason for this observation lies in the specific delay time threshold settings used in our simulations. Under these configurations, both the Max-SINR UAP and the Nearest UAP tend to exhibit relatively large DOP values, regardless of the UAP. As a result, the influence of DOP on the WM becomes less significant in distinguishing between the two strategies. Meanwhile, Max-SINR UAP benefits from better CP and NHP performance due to its stronger signal quality and interference management, which consistently leads to higher WM values across the scanned weight space. This highlights the critical role of WM scanning as not just a theoretical model but a practical design tool to assist network operators in selecting optimal UAP strategies under varying service profile.

By considering a weighted metric that includes CP, NHP, and DOP, and analyzing the impact of each factor on overall system performance, we observed a trade-off among these parameters. This insight provides valuable guidance for the practical design of multi-layer HetSatNet, i.e., depending on the specific scenario, different weights should be selected and different UAPs should be employed. For communication scenarios requiring minimal transmission delay, $\omega_3^{E,M}$ should be larger. Conversely, for scenarios demanding low satellite handover frequency and higher communication quality, the max-SINR satellite UAP is more suitable, $\omega_1^{E,M}$, and $\omega_2^{E,M}$ should be larger.
\vspace{-1em}
\section{Conclusions}
This work presented a multi-tier satellite-terrestrial communication system operating under two UAPs, i.e., nearest satellite and max-SINR satellite UAP. We computed the CP, NHP, DOP, and a novel weighted metric formed by the weighted amalgamation of CP, NHP, and DOP for both UAPs.
In addition, we compared the CP, NHP, DOP, and weighted metric between two UAP systems, revealing advantages for the nearest UAP and the max-SINR UAP in different scenarios. 
These findings provide valuable insights into real-world design: the max-SINR UAP system can improve signal reception quality and reduce the frequency of satellite handovers compared with the nearest satellite UAP. However, the nearest satellite UAP can provide a shorter time delay. Our findings confirm that comparing WM under different weights provides not only theoretical insight but also practical guidance for strategy selection in real-world deployments. The flexibility of the WM framework supports a systematic approach for balancing coverage, stability, and delay requirements across a variety of satellite network scenarios.

\begin{appendices}
\vspace{-3mm}

\section{Proof of Proposition 1}

    The probability of the typical user accessing
the nearest satellite in tier i is derived based on \cite{andrews2016primer}.
\begin{equation}
    \begin{aligned}
& \mathbb{P}_{ass,i}^E = {\mathbb{P}^E}\left\{ {S = i} \right\}\\
 &= \int_{{h_i}}^{r_{\max }^i} {\mathbb{P}\left\{ {{R_k} > r,k \ne i} \right\}{f_{{R_i}}}\left( r \right)dr} \\
& = \int_{{h_i}}^{r_{\max }^i} {\prod\limits_{k \ne i} {{e^{ - \lambda _S^k\pi \frac{{R_S^k}}{{{R_E}}}\left( {{r^2} - h_k^2} \right)}}} {f_{{R_i}}}\left( r \right)dr} \\
& = \int_{{h_i}}^{r_{\max }^i} {{e^{ - \sum\limits_{k \ne i} {{\bf{1}}\left( {{r^2} - h_k^2 \ge 0} \right)\lambda _S^k\pi \frac{{R_S^k}}{{{R_E}}}\left( {{r^2} - h_k^2} \right)} }}{f_{{R_i}}}\left( r \right)dr} \\
 &= 2\lambda _S^i\pi \frac{{R_S^i}}{{{R_E}}}\int_{{h_i}}^{r_{\max }^i} {r{e^{ - \sum\limits_{k \in K} {{\bf{1}}\left( {r - {h_k} \ge 0} \right)\lambda _S^k\pi \frac{{R_S^k}}{{{R_E}}}\left( {{r^2} - h_k^2} \right)} }}dr} 
    \end{aligned}
    \end{equation}

\section{Proof of Proposition 2}
Within the framework of the nearest satellite UAP system, the conditional PDF of distance when the serving satellite resides in $i$-th tier main-lobe reception region can be derived from the cumulative
distribution function (CDF).
\label{32}
\begin{align}
&{f_{{R_i}}^E}\left( {r|\Phi \left( {{{\cal A}_m^i}} \right) > 0,S = i} \right) \nonumber \\
& = \frac{d}{{dr}}\left( {1 - \frac{{\mathbb{P}\left\{ {{R_i} > r,\Phi \left( {{\cal A}_m^i} \right) > 0,S = i} \right\}}}{{\mathbb{P}\left\{ {\Phi \left( {{\cal A}_m^i} \right) > 0,S = i} \right\}}}} \right) \nonumber\\
&\mathop  = \limits^{\left( a \right)} \frac{d}{{dr}}\left( {1 - \frac{{\int_r^{r_m^i} {\prod\limits_{k \ne i} {{e^{ - \lambda _S^k\pi \frac{{R_S^k}}{{{R_E}}}\left( {{u^2} - h_k^2} \right)}}} } {f_{{R_i}}}\left( u \right)du}}{{\mathbb{P}_{m,ass,i}^E}}} \right) \nonumber \\
& = \frac{{2\lambda _S^i\pi \frac{{R_S^i}}{{{R_E}}}r{e^{ - \sum\limits_{k = 1}^K {\left( {r \ge {h_k}} \right)\lambda _S^k\pi \frac{{R_S^k}}{{{R_E}}}\left( {{r^2} - h_k^2} \right)} }}}}{{\mathbb{P}_{m,ass,i}^E}}.
\end{align}
where ${f_{{R_i}}} \left ( u\right)$ represents the PDF of the nearest distance distribution in tier $i$. Similarly, we can get ${f_{{R_i}}^{E}}\left( {r| {\Phi \left( \mathcal{A} _{m}^{i} \right) = 0},S = i} \right)$, which is given in (\ref{12}).
\section{Proof of Lemma 1}
The LTs of the interference power when the serving satellite resides in the main-lobe and side-lobe reception regions of the $i$-th tier are given by
\label{33}
\begin{align}
&{{\cal L}_{I_{_{ml}}^i}}\left( {s_m^i|\Phi \left( {{\cal A}_m^i} \right) > 0,S = i} \right) \nonumber = \mathbb{E}\left\{ {{e^{ - s_m^i I_{_{ml}}^i}}} \right\}  \nonumber \\
&\mathop  = \limits^{\left( a \right)} \!\! \exp \!\!\left ( \!\! -2\pi \!\!\sum_{k=1}^{K} \!\!\left(\!\! {\lambda _S^k\frac{{R_S^k}}{{{R_E}}}\!\!\int_{U_{LO}^1\left ( r \right )}^{U_{UP}^1\left ( r \right )}\!\!\!\!\! {\left(\! {1\!\!-\!\! {{\left( {1 \!\! + \! s_m^i {p_k}{G_{m,k}}r_I^{ - \alpha }\!\beta \!} \right)}^{\! - \chi }}} \!\right)\!\!{r_I}\!d{r_I}} } \!\!\right) \!\! \right )  \nonumber\\
& \times \!\!\exp \!\! \left ( \!\! -2\pi \!\! \sum_{k=1}^{K}  \lambda _S^k\frac{{R_S^k}}{{{R_E}}} \!\! \int_{r_m^k}^{r_{\max }^k} \!\! \!\! {\left(\!\! {1 \!-\! {{\left( {1 \! + \! s_m^i {p_k}{G_{s,k}}r_I^{ - \alpha } \! \beta } \right)\! }^{\! - \chi }}} \! \right)\! {r_I}d{r_I}}  \right ), \nonumber \\
\end{align}
\label{34}
    \begin{align}
    &{{\cal L}_{I_{_{sl}}^i}} \! \! \left( {s_s^i|\Phi \! \left( {{\cal A}_m^i} \right) \! =\!  0, \! S\! =\! i} \right) \! \! \nonumber \\
    & \mathop  = \limits^{\left( a \right)} \!\!\exp \!\! \left( \! \!{ \!- 2\pi \!\! \sum\limits_{k = 1}^K \!{{\lambda _S^k}\frac{{R_S^k}}{{{R_E}}} \!\! \int_{U_{UP}^1\left ( r \right )}^{U_{UP}^ 2\left ( r \right )} \!\!\!\! {\left(\!\! {1 \!-\! {{\left(\! {1 \!+\! s_s^i{p_k}{G_{s,k}}r_I^{ - \alpha } \! \beta } \right)}^{ \! - \chi }}} \! \right)\! {r_I}\! d{r_I}} } } \!\! \right) \nonumber\\
    & \times \!\! \exp \!\! \left( \!\! { \!- 2\pi \!\! \sum\limits_{k = 1}^K \!{{\lambda _S^k}\frac{{R_S^k}}{{{R_E}}} \!\!\int_{U_{LO}^1\left ( r \right )}^{U_{UP}^1\left ( r \right )} \!\!\!\! {\left( \!\! {1 \! - \! {{\left( \! {1 \! + \! s_s^i{p_k}{G_{m,k}}r_I^{ - \alpha } \! \beta \! } \right)}^{ \! - \chi }}} \! \right) \! {r_I}\! d{r_I}} } } \! \!  \right),
    \end{align}
where $(a)$ follows from the probability generating functional (PGFL) of the PPP \cite{haenggi2012stochastic}. 
\section{Proof of Theorem 1}
Each tier's CP comprises the CP when the serving satellite resides in the main-lobe and side-lobe regions, respectively. We get the CP of the whole system by multiplying the CP of each tier and the probability of the typical user connecting to each tier.
\label{35}
\begin{align}
&\mathbb{P}_C^E = \sum\limits_{i = 1}^K  \mathbb{P}\left\{ {{\rm{SINR}}_i^E \ge \gamma _{th}^i|S = i} \right\}\mathbb{P}_{ass,i}^E \nonumber\\
& = \sum\limits_{i = 1}^K {\left( \mathbb{P}{\left\{ {{\rm{SINR}}_i^E > \gamma _{th}^i|\Phi \left( {{\cal A}_m^i} \right) \!\!> \!\!0,\!S =\! i} \right\} \mathbb{P}_{m,i}^{E} \mathbb{P}_{ass,i}^E} \right.}  \nonumber \\
& + \left. { \mathbb{P}\left\{ {{\rm{SINR}}_i^E > \gamma _{th}^i|\Phi \left( {{\cal A}_m^i} \right) = 0,S = i} \right\} \mathbb{P}_{s,i}^{E} \mathbb{P}_{ass,i}^E} \right) \nonumber\\
& = \sum\limits_{i = 1}^K\! {\left(\!\!{\int_{{h_i}}^{r_m^i} \!\!\!\!\!\!\!\!{\Omega \left( {I_{ml}^i,s_m^i} \right) \!{f_{{R_i}}^{E}}\!\!\left( {r|\Phi \!\left( {{\cal A}_m^i} \right)\! >\! 0,S\! =\! i} \right)\!dr} \mathbb{P}_{m,ass,i}^{E}} \right.}  \nonumber\\
& + \!\!\!\!\left. {\int_{r_m^i}^{r_{\max }^i}\!\!\!\!\!\!\!\! {\Omega \left( {I_{sl}^i,s_s^i} \right){f_{R_i}^{E}}\left( {r|\Phi \left( {{\cal A}_m^i} \right) = 0,S = i} \right)dr} \mathbb{P}_{s,ass,i}^{E} } \right) \nonumber\\
& = \sum\limits_{i = 1}^K\! {\left(\!\! {\int_{{h_i}}^{r_m^i} \!\!\!\!\!\!\!\!\!{\Omega \!\left( {I_{ml}^i,s_m^i} \right)\!2\lambda _S^i\pi \frac{{R_S^i}}{{{R_E}}}r{e^{ - \!\sum\limits_{k = 1}^K {\!{\bf{1}}\left( {r \ge {h_k}} \right)\lambda _S^k\pi \frac{{R_S^k}}{{{R_E}}}\left( {{r^2} \!-\! h_k^2} \right)} }}\!dr} } \right.}  \nonumber \\
& + \!\!\!\left. {\int_{r_m^i}^{r_{\max }^i} \!\!\!\!\!\!\!\!\!\!\!\! {\Omega \left( {I_{sl}^i,s_s^i} \right)\!2\lambda _S^i\pi \frac{{R_S^i}}{{{R_E}}}r{e^{ - \!\sum\limits_{k = 1}^K \!{{\bf{1}}\left( {r \ge {h_k}} \right)\lambda _S^k \!\pi \frac{{R_S^k}}{{{R_E}}}\left( {{r^2} \!-\! h_k^2} \right)} }} \!dr} } \!\!\right). 
\end{align}
\vspace{-1em}
\section{Proof of Theorem 2}
\begin{figure*}[t]
\setlength{\abovecaptionskip}{-1.5cm}
\setlength{\belowcaptionskip}{-0.5cm}
\normalsize
\begin{equation}
\begin{aligned}
&{\bf{1}}\left( {t_E^i > t_{th}^i|S = i} \right) \mathop  = \limits^{\left( a \right)}  {\bf{1}}\left( {\cos \left( {\varphi _E^i - \arctan \left( {\tan \left( {\varphi _D^i} \right)\cos \left( \theta  \right)} \right)} \right) < \frac{{\cos \left( {\frac{{ct_{th}^i}}{{R_S^i}}} \right)}}{{\sqrt {1 - {{\left( {\sin \varphi _D^i} \right)}^2}{{\left( {\sin \left| \theta  \right|} \right)}^2}} }}} \right)\\
& = \left\{ \begin{array}{l}
1,{p_i}\left( \theta  \right) > 1\\
{\rm{     }}\arccos \left( {\frac{{{{\left( {R_S^i} \right)}^2} + R_E^2 - {r^2}}}{{2{R_E}R_S^i}}} \right) > \kappa \left( \theta  \right) + \arccos \left( {{p_i}\left( \theta  \right)} \right){\rm{  }},  {\rm{or  }}\arccos \left( {\frac{{{{\left( {R_S^i} \right)}^2} + R_E^2 - {r^2}}}{{2{R_E}R_S^i}}} \right) < \kappa \left( \theta  \right) - \arccos \left( {{p_i}\left( \theta  \right)} \right),\left| {{p_i}\left( \theta  \right)} \right| \le 1\\
0,{p_i}\left( \theta  \right) <  - 1
\end{array} \right.\\
&  \mathop  = \limits^{\left( b \right)}  \left\{ \begin{array}{l}
1,{p_i}\left( \theta  \right) > 1\\
R_{reg}^1 \cup R_{reg}^2,\left| {{p_i}\left( \theta  \right)} \right| \le 1\\
0,{p_i}\left( \theta  \right) <  - 1
\end{array} \right.
\label{36}
\end{aligned}
\end{equation}
${\vartheta^+}\!\!\left(  \theta  \right) \! = \!\!  {\bf{1}}\! \left( {\kappa \left( \theta  \right) \! + \! \arccos \left( {{p_i}\!\left( \theta  \right)} \right) \!>\! 0} \right) \!{\Xi _ + }\! \left( \theta  \right)$, 
${\vartheta ^ - }\left( \theta  \right) \! =\! {\bf{1}} \! \left( {\kappa \left( \theta  \right) \! -  \! \arccos \left( {{p_i}\left( \theta  \right)} \right) \!>\! 0} \right){\Xi _ - } \! \left( \theta  \right)$, \\
${\varsigma ^ + }\!\! \left( \theta  \right) \!$ $=$ $\! {\bf{1}} \! \left( {\kappa \left( \theta  \right) \! + \!  \arccos \left( {{p_i}\left( \theta  \right)} \right) \! <\!  0} \right)\!$ ${\bf{1}} \! \left(\! { - \left( {\kappa \left( \theta  \right) \! + \! \arccos \left( {{p_i}\left( \theta  \right)} \right)} \right) \! > \!  \arccos \!\left( \! {\frac{{{{\left( {R_S^i} \right)}^2} + R_E^2 - {r^2}}}{{2{R_E}R_S^i}}} \! \right)} \! \right){\Xi _ + }\! \left( \theta  \right)$, \\
${\Psi ^ + }\!\left( \theta  \right) \!$ $ = $ $\! {\bf{1}}\! \left( {\kappa \left( \theta  \right) \! + \! \arccos \left( {{p_i}\left( \theta  \right)} \right) \! < \! 0} \right)\! $ ${\bf{1}}\! \left( \!{\arccos \! \left(\! {\frac{{{{\left( {R_S^i} \right)}^2} + R_E^2 - {r^2}}}{{2{R_E}R_S^i}}} \! \right)\!  >\!  -\! \left( {\kappa \left( \theta  \right) \! + \! \arccos \left( {{p_i}\left( \theta  \right)} \right)} \right)} \!\right){\Xi _ + }\! \left( \theta  \right)$, \\
${\varsigma ^ - }\left( \theta  \right) $ $= $ ${\bf{1}}\left( {\kappa \left( \theta  \right) - \arccos \left( {{p_i}\left( \theta  \right)} \right)} >0\right)$ ${\bf{1}} \left( {\left| {\arccos \left( {\frac{{{{\left( {R_S^i} \right)}^2} + R_E^2 - {r^2}}}{{2{R_E}R_S^i}}} \right)} \right| < \kappa \left( \theta  \right) - \arccos \left( {{p_i}\left( \theta  \right)} \right)} \right){\Xi _ - }\left( \theta  \right)$, \\
${\Psi ^ - }\left( \theta  \right)$ $ = $ ${\bf{1}}\left( {\kappa \left( \theta  \right) - \arccos \left( {{p_i}\left( \theta  \right)} \right)} >0 \right) $ ${\bf{1}}\left( {\arccos \left( {\frac{{{{\left( {R_S^i} \right)}^2} + R_E^2 - {r^2}}}{{2{R_E}R_S^i}}} \right) <  - \left( {\kappa \left( \theta  \right) - \arccos \left( {{p_i}\left( \theta  \right)} \right)} \right)} \right){\Xi _ - }\left( \theta  \right)$,\\
${\Xi _ + }\!\left( \theta  \right)\! =\! \sqrt {{{\left( {R_S^i} \right)}^2} \!+\! R_E^2 \!-\! 2{R_E}R_S^i\cos \left( {\kappa \left( \theta  \right) \!+\! \arccos \left( {{p_i}\left( \theta  \right)} \right)} \right)} $, 
${\Xi _ - }\!\left( \theta  \right) \!=\! \sqrt {{{\left( {R_S^i} \right)}^2} \!+\! R_E^2 \!- \! 2{R_E}R_S^i\cos \left( {\kappa \left( \theta  \right) \!-\! \arccos \left( {{p_i}\left( \theta  \right)} \right)} \right)} $, 
$\kappa \left( \theta  \right) = \arctan \left( {\tan \left( {\varphi _D^i} \right)\cos \left( \theta  \right)} \right)$, 
${p_i}\left( \theta  \right) = \frac{{\cos \left( {\frac{{ct_{th}^i}}{{R_S^i}}} \right)}}{{\sqrt {1 - {{\left( {\sin \varphi _D^i} \right)}^2}{{\left( {\sin \left| \theta  \right|} \right)}^2}} }}$.

\hrulefill
\end{figure*} 
Using trigonometric functions, with $d_E^i \!=\! R_S^i $ $\arccos \! \left( \! {\sin {\varphi _E^i} \!\sin \varphi _D^i\cos \!\left( {\left| \theta  \right|} \right) \!+\! \cos {\varphi _E^i}\cos \varphi _D^i} \right)$,\cite{9511625,9918046,alzer1997some}, we obtain $(a)$ in (\ref{36}), where ${\varphi _E^i}$ represents the Earth-centered contact angle of $i$-th nearest satellite. By examining the trigonometric relationship between the nearest distance $r$ and ${\varphi _E^i}$ and conducting a detailed analysis of the upper and lower bounds of $r$, we convert the condition that the time for the nearest satellite to leave the dome region ${\varphi _D^i}$ exceeds the threshold time into a constraint on the distance from the nearest satellite to the typical user $r$.
For example, ${\bf{1}}\!\!\left( \!{\arccos \!\left( {{p_i}\left( \theta  \right)} \right) \!+\! \kappa \left( \theta  \right) \!<\! \arccos \! \left( \! {\frac{{{{\left( {R_S^i} \right)}^2} + R_E^2 - {r^2}}}{{2{R_E}R_S^i}}} \! \!\right)} \! \!\right) \!$ \!$=$ $\! \! {\left( {{\vartheta ^ + }\! \left( \theta  \right)\! < \! r} \right) \! \cup \! \left( {{\varsigma ^ + } \! \left( \theta  \right) \!>\! r} \right) \cup \left( {{\Psi ^ + } \!\left( \theta  \right) \! < \! r} \right)}  \! $ $=$ $\!  R_{reg}^1$, and $ {\bf{1}} \!\! \left(\! {\arccos\! \left( \! {\frac{{{{\left( {R_S^i} \right)}^2} + R_E^2 - {r^2}}}{{2{R_E}R_S^i}}} \!\right) \!\!  <\!  - \! \arccos \left( {{p_i}\left( \theta  \right)} \right) \! + \! \kappa \left( \theta \right)} \! \right) \! $\! $ = $ \!  $\! \left( {{\vartheta ^ - } \! \left( \theta  \right) \! < \! r} \right) \! $ $\cup $ $ \!  \left( {{\varsigma ^ - } \! \left( \theta  \right) \! > \! r} \right) \! $ $\cup \!  \left( {{\Psi ^ - } \! \left( \theta  \right) \! < \! r} \right) $ $= R_{reg}^2$. 
Combined with the conditional CP given in (\ref{17}), the NHP conditioned on accessing $i$-tier main-lobe region and side-lobe region are given in (\ref{19}) and (\ref{20}), respectively. 
$\Omega\left(I_{ml}^i, s_m^i\right)$ and $\Omega\left(I_{sl}^i, s_s^i\right)$ can be calculated by Theorem 1.
Multiplying the conditional NHPs given in (\ref{19}) and (\ref{20}) by the respective access probabilities for each tier and then summing them, we obtain the result presented in (\ref{18}).

\section{Proof of Proposition 3}
In a max-SINR satellite UAP system, we assume that the small-scale fading is correlated, thus neglecting its impact when calculating access probabilities. The probability of accessing tier $i$ is equivalent to the probability that the desired signal power from the nearest satellite in tier $i$ to the typical user exceeds that from all other tiers.
\label{37}
\begin{align}
&\mathbb{P}_{ass,i}^M= \int_{{h_i}}^{r_{max}^i} {\mathbb{P}{^M}\left\{ {S = i|{R_i} = r} \right\}{f_{{r}}^{E}}\left( r \right)dr} \nonumber \\
&=\!\! \int_{{h_i}}^{r_{max}^i}\!\!\!\! {\prod\limits_{k \ne i} \!\!{{e^{ \!- \!{\lambda _S^k}\pi \frac{{R_S^k}}{{{R_E}}}{\bf{1}}\left( \!{{{\left( {\frac{{{p_k}{G_{m,k}}}}{{{p_i}{G_{m,i}}}}} \right)}\!^{\frac{1}{\alpha }}}r > {h_k}} \!\right)\left(\! {{{\left( \!{\frac{{{p_k}{G_{m,k}}}}{{{p_i}{G_{m,i}}}}} \!\right)}\!^{\frac{2}{\alpha }}}{r^2} - h_k^2} \!\right)}}}\!\! {f_{{r}}^{E}}\!\left(\! r \right)\!\!dr} \nonumber \\
&= \!\!\!2{\lambda _S^i}\pi \frac{{R_S^i}}{{{R_E}}}\int_{{h_i}}^{r_{max}^i} {r{e^{ - \left( {{\psi _{i}^{MAX}}\left( r \right) + {\lambda _S^i}\pi \frac{{R_S^i}}{{{R_E}}}\left( {{r^2} \!-\! {h_i^2}} \right)} \right)}}\!dr},
\end{align}
where ${\psi_{i} ^{MAX}}\left( r \right)$ is given in Proposition 3, ${f_{{r}}^{E}}\!\left(\! r \right)$ represents the PDF of the distance distribution between the nearest satellite in tier $i$ and the typical user.

\section{Proof of Theorem 4}
Under the max-SINR satellite UAP, we divided the system's CP for the $K$-tier network into two parts, i.e., the satellite providing the maximum SINR is in the main-lobe region and the side-lobe, respectively. 
The overall CP is given below.
\begin{equation}
\label{38}
    \begin{aligned}
    &\mathbb{P}_C^M = \mathbb{P}\left\{ {\mathop {\max }\limits_{i,i \in K} \mathop {\max }\limits_{X \in \Phi _m^i} {\rm{SINR}}\left( X \right) > \gamma _{th}^i|\Phi \left( {{{\cal A}_m}} \right) > 0} \right\}\mathbb{P}_m^M\\
   &  + \mathbb{P}\left\{ {\mathop {\max }\limits_{i,i \in K} \mathop {\max }\limits_{X \in \Phi _s^i} {\rm{SINR}}\left( X \right) > \gamma _{th}^i|\Phi \left( {{{\cal A}_m}} \right) = 0} \right\}\mathbb{P}_s^M.
    \end{aligned}
\end{equation}
The CP, under the condition that the satellite providing the maximum SINR is in the main-lobe region, is given below.
\begin{equation}
\label{39}
\begin{aligned}
    &\mathbb{P}\left\{ {\mathop {\max }\limits_{i,i \in K} \mathop {\max }\limits_{X \in {\Phi _m^i}} {\rm{SINR}}\left( X \right) > {\gamma _{th}^i|\Phi \left( {{{\cal A}_m}} \right) > 0}} \right\}\\
    &= \mathbb{P}\left\{ {\mathop {\max }\limits_{i,i \in K} \frac{{{M_i^m}}}{{I^M - {M_i^m} + {\sigma^2}}} > {\gamma _{th}^i|\Phi \left( {{{\cal A}_m}} \right) > 0}} \right\}\\
    &\mathop  = \mathbb{P}\left\{ {\mathop {\max }\limits_{i } \left( {{M_i^m}{\delta _{th}^i}} \right) > I^{M} + {\sigma ^2}} \right\},
\end{aligned}
\end{equation}
where $M_{i}^m$ represents the maximum desired signal power in $i$-th tier, and $M_m^i = \mathop {\max }\limits_{X \in \Phi _m^i} {p_i}{G_{m,i}}{H_X}R_X^{ - \alpha }$, $\frac{{1 + {\gamma _{th}^i}}}{{{\gamma _{th}^i}}} = {\delta _{th}^i}$.
and $I^{M}$ is the total aggregated interference in the max-SINR UAP system, given by
${I^M} = \sum\limits_{k = 1}^K {\left( {\sum\limits_{X \in \Phi _m^k} {{p_k}{G_{m,k}}{H_X}R_X^{ - \alpha }}  + \sum\limits_{X \in \Phi _s^k} {{p_k}{G_{s,k}}{H_X}R_X^{ - \alpha }} } \right)} $.
The LTs of ${{I^M} + \sigma ^{2} }$ is given in (\ref{40}),
\begin{figure*}[t]
\setlength{\abovecaptionskip}{-1.5cm}
\setlength{\belowcaptionskip}{-0.5cm}
\normalsize
\begin{equation}
\begin{aligned}
&{\cal L}_{{I^M} + \sigma  ^{2} }^u\left( {s|\Phi \left( {{{\cal A}_m}} \right) > 0} \right) \mathop  = \limits^{\left( a \right)} \mathbb{E}\left\{ {{e^{ - s{\sigma ^{2}}}}} \right\} \mathbb{E} \left\{ {{e^{ - s{I^M}}} \times {\bf{1}}\left( {\mathop {\max }\limits_{i \in K} \left( {\delta _{th}^iM_i^m} \right) \le u} \right)} \right\}\\
&\mathop  = \limits^{\left( b \right)} \mathbb{E} \!\!\left\{ {{e^{ - s \sigma  ^{2}}}} \!\! \right\} \! \prod_{k=1}^{K} \!\! {e ^ {{ - \lambda _S^k2\pi \! \frac{{R_S^k}}{{{R_E}}} \!\! \int_{r_m^k}^{r_{\max }^k} \!\! {\left(\!  {1 - {E_{{H_X}}}\! \left\{ \! {{e^{ - s{p_k}{G_{s,k}} \! {H_X} \! r_I^{ - \alpha }\! }}} \! \right\}} \! \right){r_I}d{r_I}} } }} \!\! \times  \! \! 
\prod\limits_{k = 1}^K \!\! { e ^{{ \! - \lambda _S^k2\pi \! \frac{{R_S^k}}{{{R_E}}} \! \! \int_{{h_k}}^{r_m^k} \!\! {\left( \! {\! 1 - {E_{{H_X}}} \!\left\{ {{e^{ \!- s{p_k}{G_{m,k}}\! {H_X}r_I^{ - \alpha }}}\!{\bf{1}}\left(\! {{H_X} \le \frac{u r_I^{ \alpha }}{\delta _{th}^i{p_k}{G_{m,k}}} } \! \right)} \! \right\}} \! \right)\!{r_I}d{r_I}} }}  } \\
&\mathop  = \limits^{\left( c \right)} E\left\{ {{e^{ - j\omega \sigma  ^{2}}}} \right\}\exp \left( {\sum\limits_{k = 1}^K { - \lambda _S^k2\pi \frac{{R_S^k}}{{{R_E}}}\int_{r_m^k}^{r_{\max }^k} {\left( {1 - {{\left( {j\omega \beta {p_k}{G_{s,k}}r_I^{ - \alpha } + 1} \right)}^{ - \chi }}} \right){r_I}d{r_I}} } } \right) \times \\
&\exp \left( {\sum\limits_{k = 1}^K { - \lambda _S^k2\pi \frac{{R_S^k}}{{{R_E}}}\int_{{h_k}}^{r_m^k} {\left( {1 - \frac{{{{\left( {j\omega {p_k}{G_{m,k}}r_I^{ - \alpha } + {\beta ^{ - 1}}} \right)}^{ - \chi }}}}{{{\beta ^\chi }\Gamma \left( \chi  \right)}}\gamma \left( {\chi ,\frac{{j\omega y}}{{\delta _{th}^i}} + \frac{y}{{\beta \delta _{th}^i{p_k}{G_{m,k}}r_I^{ - \alpha }}}} \right)} \right){r_I}d{r_I}} } } \right).
\label{40}
\end{aligned}
\end{equation}
\hrulefill
\end{figure*} 
where $(a)$ comes from \cite{dhillon2011coverage}, $(b)$ is obtained from PGFL of PPP \cite{haenggi2012stochastic}, and $(c)$ is obtained from the LT of $H$ and \cite{gradshteyn2014table}.
Let $g  \left( u \right) \! = \!\! \sum\limits_{k = 1}^K \! - \!\lambda _S^k2\pi \frac{{R_S^k}}{{{R_E}}} \!\! \int_{{h_k}}^{r_m^k} \!\!  {\left( \! \! {1 \!\!  - \!\! \int_0^{\frac{u r_I^{ \alpha }}{{\delta _{th}^i{p_k}{G_{m,k}}}}}\!\!  {{e^{ - s{p_k}{G_{m,k}}hr_I^{ - \alpha }}} \! \! f\left( h \right) \! dh} } \!\! \right){r_I} d{r_I}} $. Differentiating $g  \left( u \right) $, we get 
\begin{align}
\label{41}
&\frac{{\partial g \! \left( u \right)}}{{\partial u}} \!\!= \! \sum\limits_{k = 1}^K \frac{\lambda _S^k2\pi {R_S^k}}{{{R_E}{\beta ^\chi }\Gamma \left( \chi  \right)}}{e^{ - \left( {\frac{{su}}{{\delta _{th}^i}}} \right)}}{{\left( {\delta _{th}^i{p_k}{G_{m,k}}} \right)}^{ - \chi }}{u^{\chi  - 1}} \times \nonumber  \\
& \int_{{h_k}}^{r_m^k} {r_I^{\alpha \chi  + 1}\exp \left( { - \frac{u}{{\beta \delta _{th}^i{p_k}{G_{m,k}}}}r_I^\alpha } \right)d{r_I}}  \nonumber \\
& \mathop  = \limits^{\left( a \right)} \sum\limits_{k = 1}^K  \frac{\lambda _S^k2\pi {R_S^k}}{{{R_E}\Gamma \left( \chi  \right) \alpha}}{e^{ - \left( {\frac{{j\omega y}}{{\delta _{th}^i}}} \right)}}{y^{ - \frac{2}{\alpha } - 1}}{{\left( {\beta \delta _{th}^i{p_k}{G_{m,k}}} \right)}^{\frac{2}{\alpha }}} \times \nonumber  \\
& \left( \! {\gamma \left( \!  {\frac{{\alpha \chi  + 2}}{\alpha },\frac{{y{{\left( {r_{\mathop{\rm m}\nolimits} ^k} \right)}^\alpha }}}{{\delta _{th}^i\beta {p_k}{G_{m,k}}}}} \right) \! -  \! \gamma \left( {\frac{{\alpha \chi  + 2}}{\alpha },\frac{{y{{\left( {{h_k}} \right)}^\alpha }}}{{\delta _{th}^i\beta {p_k}{G_{m,k}}}}} \!\right)} \! \right),
\end{align}
where $\left( a \right)$ is derived based on \cite{gradshteyn2014table}. Differentiating ${\cal L}_{I^{M} + {\sigma ^2}}^u\left( s \right)$, we get
\begin{equation}
\label{42}
    \begin{aligned} 
&\frac{{\partial {\cal L}_{{I^M } \!+ \! {\sigma ^2}}^u \!\! \left( {s|\Phi \!\! \left( {{{\cal A}_m}} \right) \! > 0} \right)}}{{\partial u}} = {\cal L}_{{I^M} + {\sigma ^2}}^u \!\! \left( \! {s|\Phi \left( {{{\cal A}_m}} \right) \! > \! 0} \right) \!\! \frac{{\partial g\left( u \right)}}{{\partial u}}\\
& = {\Theta ^M}\left( {s,u} \right){\Upsilon ^M}\left( {s,u} \right),
    \end{aligned}
\end{equation}
where ${\Theta ^M}\left( {s,u} \right) $ and $\Upsilon ^M\left( {s,u} \right)$ are given in (\ref{26}) and (\ref{27}). 
Let $f\left( {x,y} \right)$ denotes the joint density of $I^{M} + {\sigma ^2}$ and $\max \left( {{\delta _{th}^i}{M_i^m}} \right)$. The CP equals 
\begin{align}
\label{43}
& \mathop  = \mathbb{P}\left\{ {\mathop {\max }\limits_{i } \left( {{M_i^m}{\delta _{th}^i}} \right) > I^{M} + {\sigma^2}} \right\} \nonumber \\
&= \int_0^\infty  {\int_0^\infty  {f\left( {x,y} \right){\bf{1}}\left( {y > x > \frac{y}{{\max \left( {{\delta _{th}^i}} \right)}}} \right)} dxdy} \nonumber  \\
&\mathop  = \limits^{\left( a \right)} \int_0^\infty  {\int_0^\infty  {f\left( {x,y} \right)\int_{\beta  - j\infty }^{\beta  + j\infty } {\frac{{{e^{ - s\frac{y}{{\max \left( {{\delta _{th}^i}} \right)}}}} - {e^{ - sy}}}}{{2\pi js}}{e^{sx}}ds} dxdy} } \nonumber \\
&\mathop  = \limits^{\left( b \right)} \int_0^\infty  {\int_0^\infty  {f\left( {x,y} \right){e^{ - sx}}dx\int_{\beta  - j\infty }^{\beta  + j\infty } {\frac{{{e^{sy}} - {e^{s\frac{y}{{\max \left( {{\delta _{th}^i}} \right)}}}}}}{{2\pi js}}dsdy} } } \nonumber \\
&\mathop  = \limits^{\left( c \right)} \int_0^\infty  {\int_{\beta  - j\infty }^{\beta  + j\infty } {\frac{{{e^{sy}} - {e^{s\frac{y}{{\max \left( {{\delta _{th}^i}} \right)}}}}}}{{2\pi js}}\widehat f\left( {s,y} \right)dsdy} } \nonumber \\
&\mathop  = \limits^{\left( d \right)} \int_0^\infty  {\int_{ - \infty }^{ + \infty } {\frac{{{e^{j\omega y}} - {e^{j\omega \frac{y}{{\max \left( {{\delta _{th}^i}} \right)}}}}}}{{2\pi j\omega }}\widehat f\left( {j\omega ,y} \right)d\omega dy} }, 
\end{align}
where $$\widehat f\left( {j\omega ,y} \right) = \frac{{\partial {\cal L}_{I^{M} + {\sigma ^2}}^u\left( {j\omega } \right)}}{{\partial u}} = \int_{x = 0}^\infty  {f\left( {x,y} \right){e^{ - j\omega x}}dx}, $$ 
$(a)$ arises from the laplace transform of $x$ in ${\bf{1}}\left( {y > x > \frac{y}{{\max \left( {{\delta _i}} \right)}}} \right)$, $(b)$ employs a substitution method, $(c)$ involves a bivariate laplace transform, where ${\widehat f\left( {s,y} \right)}$ represents the Laplace transform of $f\left( {x,y} \right)$ with respect to $x$, and $(d)$ represents the transformation between laplace transform and fourier transform when ${\mathop{\rm Re}\nolimits} \left\{ s \right\} = 0$.

By substituting the derivation result from (\ref{42}) into (\ref{43}), and calculate $ \mathbb{P}\left\{ {\mathop {\max }\limits_{i,i \in K} \mathop {\max }\limits_{X \in \Phi _s^i} {\rm{SINR}}\left( X \right) > \gamma _{th}^i|\Phi \left( {{{\cal A}_m}} \right) = 0} \right\}$ with the same method, we obtain the outcome presented in (\ref{25}).
\end{appendices}



\ifCLASSOPTIONcaptionsoff
  \newpage
\fi

\vspace{-2mm}
\bibliographystyle{IEEEtran}
\bibliography{ref}

\end{document}